\newtheorem{theorem}{Theorem}
\newtheorem*{theorem*}{Theorem}
\newtheorem{corollary}[theorem]{Corollary}
\newtheorem{lemma}[theorem]{Lemma}
\newtheorem{claim}{Claim}
\newtheorem{remark}[theorem]{Remark}
\newtheorem{observation}[theorem]{Observation}
\newtheorem{example}[theorem]{Example}
\theoremstyle{definition}
\newcommand{\hide}[1]{}
\newtheorem{question}{Question}
\newcommand{\s}[1]{\mathsf{#1}}
\newcommand{\NW}{\s{NW}}
\newcommand{\prop}{\s{Prop}}
\newcommand{\rrs}{\s{RRS}}
\newcommand{\NP}{\s{NP}}
\newcommand{\A}{\mathcal{A}}
\newcommand{\G}{\mathcal{G}}
\newcommand{\I}{\mathcal{I}}
\newcommand{\x}{\mathbf{x}}
\newcommand{\y}{\mathbf{y}}
\newcommand{\PGNSW}{\s{PGNW}}
\newcommand{\ERSP}{\s{ERSP}}
\newcommand{\N}{\mathbb N}
\newcommand{\Z}{\mathbb Z}
\newcommand{\xmark}{\ding{55}}
\newcommand{\private}{\s{PrivateGoods}}
\newcommand{\privatemnw}{\s{PrivateMNW}}
\newcommand{\privatelex}{\s{PrivateLex}}
\newcommand{\public}{\s{PublicGoods}}
\newcommand{\publicmnw}{\s{PublicMNW}}
\newcommand{\publiclex}{\s{PublicLex}}
\newcommand{\decision}{\s{PublicDecisions}}
\newcommand{\decisionmnw}{\s{DecisionMNW}}
\newcommand{\decisionlex}{\s{DecisionLex}}
\title{On Fair and Efficient Allocations of Indivisible Public Goods\thanks{Supported by NSF Grant CCF-1942321 (CAREER)}}
\author{Jugal Garg \footnote{University of Illinois at Urbana-Champaign, USA} \\
\texttt{\small jugal@illinois.edu} 
\and
Pooja Kulkarni\footnote{University of Illinois at Urbana-Champaign, USA}\\
\texttt{\small poojark2@illinois.edu}
\and
Aniket Murhekar\footnote{University of Illinois at Urbana-Champaign, USA}\\
\texttt{\small aniket2@illinois.edu}
}
\date{}
\begin{document}

\maketitle

\begin{abstract}
We study fair allocation of indivisible public goods subject to cardinality (budget) constraints. In this model, we have $n$ agents and $m$ available public goods, and we want to select $k \leq m$ goods in a fair and efficient manner. We first establish fundamental connections between the models of private goods, public goods, and public decision making by presenting polynomial-time reductions for the popular solution concepts of maximum Nash welfare (MNW) and leximin. These mechanisms are known to provide remarkable fairness and efficiency guarantees in private goods and public decision making settings. We show that they retain these desirable properties even in the public goods case. We prove that MNW allocations provide fairness guarantees of Proportionality up to one good (Prop1), $1/n$ approximation to Round Robin Share (RRS), and the efficiency guarantee of Pareto Optimality (PO). 
Further, we show that the problems of finding MNW or leximin-optimal allocations are $\NP$-hard, even in the case of constantly many agents, or binary valuations. This is in sharp contrast to the private goods setting that admits polynomial-time algorithms under binary valuations. We also design pseudo-polynomial time algorithms for computing an exact MNW or leximin-optimal allocation for the cases of (i) constantly many agents, and (ii) constantly many goods with additive valuations. We also present an $O(n)$-factor approximation algorithm for MNW which also satisfies RRS, Prop1, and $1/2$-Prop.
\end{abstract}

\section{Introduction}
The problem of fair division was formally introduced by Steinhaus~\cite{steinhaus1948problem}, and has since been extensively studied in economics and computer science \cite{Brams1996FairD,Moulin2003}. Recent work has focused on the problem of fair and efficient allocation of indivisible \textit{private} goods. We label this setting as the $\private$ model. Here, goods have to be partitioned among agents, and a good provides utility only to the agent who owns it. However, goods are not always private, and may provide utility to multiple agents simultaneously, e.g., books in a public library. The fair and efficient allocation of such \emph{indivisible public goods} is an important problem.

In this paper we study the setting of $\public$, where a set of $n$ agents have to select a set of at most $k$ goods from a set of $m$ given goods. This simple cardinality constraint models several real world scenarios. While previous work has largely focused on the $k < n$ case, e.g., for voting and committee selection \cite{AzizVoting,committeemunagala}, there is much less work available for the case of $k \geq n$. This setting is important in its own right. We present a few compelling examples.

\begin{example}\label{ex:library}
A \textit{public library} wants to buy $k$ books that adhere to preferences of $n$ people who might use the library. Clearly, the number of books has to be much greater than the number of people using the library, hence $k \gg n$.
\end{example}

\begin{example}\label{ex:movies}
A family (or a group of $n$ friends) wants to decide on a list of $k$ movies to watch together for a few months. Here too, $k > n$. Another example of the same flavor is a committee tasked with inviting speakers at a year-long weekly seminar.
\end{example}

\begin{example}\label{ex:search}
Another important example is that of \textit{diverse search results} for a query. Given a query (say of ``computer scientist images'') on a database, we would like to output $k$ search results which reflect diversity in terms of $n$ specified features (like ``gender, race and nationality''). Once again, $k \ge n$.
\end{example} 

A related setting $\decision$ of public decision making~\cite{PublicDecision} models the scenario in which $n$ agents are faced with $m$ issues with multiple alternatives per issue, and they must arrive at a decision on each issue. Conitzer \textit{et al.}~\cite{PublicDecision} showed that this model subsumes the $\private$ setting. 

\paragraph{Connections between the models.} A central question motivating this work is:

\begin{question}\label{qn:connections}
Can we establish \textit{fundamental connections} between the three models $\private$, $\public$, and $\decision$?
\end{question}

To answer this question, we first describe two well-studied solution concepts for allocating goods in the $\private$ and $\decision$ models, namely the \textit{maximum Nash welfare} (MNW) and \textit{leximin} mechanisms. These mechanisms have been shown to produce allocations that are fair and efficient in the models of $\private$ and $\decision$. The MNW mechanism returns an allocation that maximizes the geometric mean of agents' utilities, and the leximin mechanism returns an allocation that maximizes the minimum utility, and subject to this, maximizes the second minimum utility, and so on. We label the problems of computing the Nash welfare maximizing (resp. leximin optimal) allocation in the three models as $\privatemnw, \publicmnw, \decisionmnw$ (resp. $\privatelex, \publiclex, \decisionlex$).

We answer Question~\ref{qn:connections} positively by presenting novel polynomial-time reductions from the model of $\private$ to $\public$, and from $\public$ to $\decision$ for the problem of computing a Nash welfare maximizing allocation. 
\begin{equation}\label{eq:reduction1}
\boxed{ \privatemnw \le \publicmnw \le \decisionmnw}
\end{equation}
More notably, these reductions also work for the MNW problem when restricted to binary valuations. Apart from establishing fundamental connections between these models, our reductions also determine the complexity of the MNW problem, as we detail below. We also develop similar reductions between the models for the leximin mechanism, showing:
\begin{equation}\label{eq:reduction2}
\boxed{ \privatelex \le \publiclex \le \decisionlex }
\end{equation}

\paragraph{Fairness and efficiency considerations.} We next describe the fairness and efficiency properties that the MNW and leximin mechanisms have been shown to satisfy in the $\private$ and $\decision$ models. 

The standard notion of economic efficiency is Pareto-optimality (PO). An allocation is said to be PO if no other allocation makes an agent better off without making anyone worse off. The classical fairness notion of \textit{proportionality} requires that every agent gets her \textit{proportional value}, i.e., $1/n$-fraction of the maximum value she can obtain in any allocation. However, proportional allocations are not guaranteed to exist.\footnote{Consider for example, two agents $A$ and $B$ and six public goods $\{g_1, g_2, g_3, g_4, g_5, g_6\}$. Agent $A$ has value $1$ for $g_1, g_2, g_3$ and $B$ has value $1$ for $g_4, g_5, g_6$. All other valuations are $0$. Suppose we want to select three of these goods. The proportional share of both agents is $1.5$. However, in any allocation, the value of at least one agent is at most $1$, implying that proportional allocations need not exist.} Hence, we study the notion of Proportionality up to one good (Prop1) for $\public$. We say an allocation is Prop1 if for every agent $i$ who does not get her proportional value, $i$ gets her proportional value after swapping some unselected good with a selected one. For $\private$ and $\decision$, Prop1 is defined similarly -- in the former, an agent is given an additional good \cite{barman2019prop1,McGlaughlinG20}; and in the latter, an agent is allowed to change the decision on a single issue \cite{PublicDecision}. While Prop1 is an individual fairness notion, it is still important for allocating public goods. For instance, in Example~\ref{ex:library}, we want allocations in which every agent has some books that cater to her taste, even if her taste differs from the rest of the agents. Likewise, in Example~\ref{ex:movies}, a fair selection of movies must ensure that there are some movies every member can enjoy. We also consider the fairness notion of Round-Robin Share (RRS)~\cite{PublicDecision}, which demands that each agent $i$ receives at least the utility which she would get if agents were allowed to pick goods in a round-robin fashion, with $i$ picking last.

In the $\private$ and $\decision$ models, an MNW allocation satisfies Prop1 in conjunction with PO ~\cite{caragiannis2019unreasonable,PublicDecision}. Similarly in both these models, the leximin-optimal allocation satisfies RRS and PO~\cite{PublicDecision}. It is therefore natural to ask:

\begin{question}
What guarantee of fairness and efficiency do the MNW and leximin mechanisms provide in the $\public$ model?
\end{question}

Answering this question, we show that an MNW allocation satisfies Prop1, $1/n$-approximation to RRS, and is PO. Further, a leximin-optimal allocation satisfies RRS, Prop1 and PO.

\paragraph{Complexity of computing MNW and leximin-optimal allocations.}
Given the desirable fairness and efficiency properties of these mechanisms, we investigate the complexity of computing MNW and leximin-optimal allocations in the $\public$ model. It is known that $\privatemnw$ is $\s{APX}$-hard \cite{lee2017apx} (hard to approximate) and $\decisionmnw$ \cite{PublicDecision} is $\s{NP}$-hard.  Likewise, $\privatelex$ too is $\s{NP}$-hard~\cite{bezakova2005leximin}. Therefore, we ask:

\begin{question}
What is the complexity of $\publicmnw$ and $\publiclex$?
\end{question}

Since $\privatemnw$ and $\privatelex$ are known to be $\s{NP}$-hard, our reductions \eqref{eq:reduction1} and \eqref{eq:reduction2} immediately show that $\publicmnw$ and $\publiclex$ are $\s{NP}$-hard. However, we show stronger results that $\publicmnw$ and $\publiclex$ remain $\s{NP}$-hard even when the valuations are binary. These results are in stark contrast to the $\private$ case, which admits polynomial-time algorithms for binary valuations \cite{NSWbin2,freeman2019eqxpo}. Further, our reductions between $\public$ and $\decision$ also directly enable us to show NP-hardness of $\decisionmnw$ and $\decisionlex$. Moreover, a feature of our reductions (Observation~\ref{rem:reductionvalues}) enables us to shows that $\decisionmnw$ is $\s{NP}$-hard even for binary valuations, highlighting the utility of our reductions. We also show that $\publicmnw$ and $\publiclex$ remain $\s{NP}$-hard even when there are only two agents. We note that for the case of two agents, the $\s{NP}$-hardness of $\privatemnw$ and $\privatelex$ does not imply $\s{NP}$-hardness of $\publicmnw$ and $\publiclex$ because our reductions between the models do not preserve the number of agents. We summarize our results in Table~\ref{tab:complexity}.

\begin{table}[t]
\centering
\begin{tabular}{|c|c|c|c|}\hline
Problem & $\private$ & $\public$ & $\decision$ \\\hhline{|=|=|=|=|}
MNW $\{0,1\}$ valuations & $\s{P}$~\cite{NSWbin1,NSWbin2} & $\s{NP}$-hard (Theorem~\ref{thm:mnw-main})& $\s{NP}$-hard (Corollary~\ref{cor:pdmhardness}) \\\hline
Leximin $\{0,1\}$ valuations & $\s{P}$~\cite{NSWbin1,NSWbin2} & $\s{NP}$-hard (Theorem~\ref{thm:np-hard-lm-bin})& ? \\\hline
MNW two agents & $\s{NP}$-hard & $\s{NP}$-hard (Theorem~\ref{thm:np-const})& ? \\\hline
Leximin two agents & $\s{NP}$-hard & $\s{NP}$-hard (Theorem~\ref{thm:hard-lm-const})& ? \\\hline
\end{tabular}
\caption{Complexity of computing MNW and leximin-optimal allocations}
\label{tab:complexity}
\end{table}

In light of the above computational hardness, we turn to approximation algorithms and exact algorithms for special cases. We design a polynomial-time algorithm that returns an allocation which approximates the MNW to a $O(n)$-factor when $k\ge n$, and is also Prop1 and satisfies RRS. Finally, we obtain pseudo-polynomial time algorithms for computing MNW and leximin-optimal allocations for constant $n$. These are essentially tight in light of the $\s{NP}$-hardness for constant $n$.

\subsection{Other related work}
\smallskip\noindent\textbf{Maximum Nash welfare.} The problem of approximating maximum Nash welfare for private goods is well-studied, see e.g.,~\cite{cole2015approximating,barman2018finding,chaudhary2018approxnswindiv,GargHV21}. \cite{ShahPublic} showed that the MNW problem is NP-hard for allocating public goods subject to matroid or packing constraints.
It has also been studied in the context of voting, or multi-winner elections~\cite{PositionalSD}. Fluschnik \textit{et al.}~\cite{fluschnik2019fair} studied the fair multi-agent knapsack problem, wherein each good has an associated budget, and a set of goods is to be selected subject to a budget constraint. In this context, they studied the objective of maximizing the geometric mean of $(1+u_i)$ where $u_i$ is the utility of the $i^{th}$ agent. They showed that maximizing this objective is $\s{NP}$-hard, even for binary valuations or constantly many agents with equal budgets and presented a pseudo-polynomial time algorithm for constant $n$.

\smallskip\noindent\textbf{Leximin.} Leximin was developed as a fairness notion in itself \cite{rawls2009theory}. Plaut and Roughgarden \cite{plaut2020almost} showed that for private goods, leximin can be used to construct allocations that are envy-free up to any good. Freeman \textit{et al.}~\cite{freeman2019eqxpo} showed that in the $\private$ model the MNW and leximin-optimal allocations coincide when valuations are binary.

\smallskip\noindent\textbf{Core.}
Core is a strong property that enforces both PO and proportionality-like fairness guarantees for all subsets of agents. It is well-studied in many settings, including game theory and computer science \cite{scarf1967core,kunjir2017robus}. The core of indivisible public goods might be empty. Fain \textit{et al.}~\cite{ShahPublic} proved that under matroid constraints, a $2$-additive approximation to core exists. On an individual fairness level, 1-additive core is weaker than Prop1~\cite{ShahPublic}.

\smallskip\noindent\textbf{Participatory Budgeting.} The participatory budgeting problem \cite{aziz2018pb,aziz2020participatory} consists of a set of $n$ agents (or voters), and a set of $k$ projects that require funds, a total available budget, and the preferences of the voters over the projects. The problem is to allocate the budget in a fair and efficient manner. Here typically $k\ll n$. Fain \textit{et al.}~\cite{partbudget} showed that the fractional core outcome is polynomial-time computable. This could be modeled as a public goods problem with goods as the projects.

\smallskip\noindent\textbf{Voting and Committee Selection.} These settings involve selecting a set of $k$ members from a set of $m$ candidates based on the preferences of $n$ agents. Usually, here $k \ll n$ and the fairness notions studied are group fairness like Justified Representation \cite{AzizVoting}, and a core-like notion called \textit{stability} \cite{committeemunagala}.

\section{Notation and Preliminaries}
\paragraph{Problem setting.} For $t\in \mathbb{N}$, let $[t]$ denote $\{1,\dots,t\}$. An instance of the $\public$ allocation problem is given by a tuple $\I = (\A,\G,k,\{v_i\}_{i\in\A})$ of a set $\A = [n]$ of $n\in\N$ agents, a set $\G = [m]$ of $m\in\N$ public goods, an integer $0 \le k \le m$, and a set of valuation functions $\{v_i\}_{i\in \A}$, one per agent, where each $v_i : 2^\G \rightarrow \Z_{\ge 0}$. Unless specified, we assume that $k\ge n$. For a subset of goods $S\subseteq\G$, $v_i(S)$ denotes the utility agent $i$ derives from the goods in $S$. Unless specified, we assume the valuations are \textit{additive}. In this case, each $v_i$ is specified by $m$ non-negative integers $\{v_{ij}\}_{j\in\G}$, where $v_{ij}$ denotes the value of agent $i$ for good $j$. Then for $S \subseteq \G$, $v_i(S) = \sum_{j\in S} v_{ij}$. We assume without loss of generality that for every agent $i$, there is at least one good $j$ with $v_{ij} > 0$. For brevity, we write $v_i(g_1,\ldots,g_r)$ in place of $v_i(\{g_1,\ldots,g_r\})$ for a set $\{g_1,\ldots,g_r\}\subseteq\G$. An \textit{allocation} is a subset $\x\subseteq\G$ of goods which satisfies the cardinality constraint $|\x|\le k$.

\paragraph{Nash welfare.} The Nash welfare (NW) of an allocation $\x$ is given by $\NW(\x) = ( \prod_{i\in \A} v_i(\x))^{1/n}.$ An allocation with the maximum NW is called an MNW allocation or a Nash optimal allocation.\footnote{If the NW is 0 for all allocations, MNW allocations are defined as those which give non-zero utility to maximum number of agents, and then maximize the product of utilities for those agents. Note if $k \geq n$, every agent positively values at least one good and thus MNW $> 0$.} We also refer to the product of the agents' utilities as the \emph{Nash product}. An allocation $\x$ \textit{approximates} MNW to a factor of $\alpha$ if $\NW(\x) \ge \alpha \cdot \NW(\x^*)$, where $\x^*$ is an MNW allocation.

\paragraph{Leximin.} 
Given an allocation $\x$, let $\hat{\x}$ denote the vector of agent's utilities under $\x$, sorted in non-decreasing order. For two allocations $\x,\y$, we say $\x$ \emph{leximin-dominates} $\y$ if there exists $i \in [n]$ such that $\hat{\x}_i > \hat{\y}_i$ and $\forall j < i, \hat{\x}_j = \hat{\y}_j$. An allocation is leximin-optimal if no other allocation leximin-dominates it.

\paragraph{Fairness notions.}
We now discuss fairness notions for the $\public$ setting. The \textit{proportional share} of an agent $i$, denoted by $\prop_i$ is a $1/n$-share of the maximum value she can obtain from any allocation. Formally: 
\[\prop_i = \frac{1}{n} \cdot \max_{\x\subseteq\G, |\x|\le k}  v_i(\x).\]
The round-robin share of agent $i$, denoted by $\rrs_i$, is the minimum value an agent can be guaranteed if the agents pick $k$ goods in a round-robin fashion, with $i$ picking last. Therefore, this value equals the maximum value of any $\lfloor k/n \rfloor$ sized subset. Formally: 
\[\rrs_i = \max_{\x\subseteq\G, |\x|\le \lfloor k/n \rfloor}  v_i(\x).\]
For $\alpha\in (0,1]$, an allocation $\x$ is said to satisfy:
\begin{enumerate}
\item $\alpha$-Proportionality ($\alpha$-Prop) if $\forall i\in\A$, $v_i(\x) \ge \alpha\prop_i$;
\item $\alpha$-Proportionality up to one good ($\alpha$-Prop1) if $\forall i\in\A$, $\exists g\in\x, g'\in \G$, such that $v_i((\x\setminus g)\cup g') \ge \alpha\prop_i,$
\item $\alpha$-$\s{RRS}$ if for all agents $i\in\A$, $v_i(\x) \ge \alpha\rrs_i$.
\end{enumerate}
Due to the cardinality constraints in the $\public$ model, the notion of Prop1 requires that for every agent, there is a way to swap one preferred unpicked good with one picked good, after which the agent gets her proportional share. Since Prop1 in $\private$ requires only giving an extra good, this makes the definition of Prop1 in $\public$ slightly more demanding than that in $\private$.

\paragraph{Pareto-optimality.} An allocation $\y$ is said to Pareto-dominate an allocation $\x$ if for all agents $i\in\A$, $v_i(\y) \geq v_i(\x)$, with at least one of the inequalities being strict. We say $\x$ is Pareto-optimal (PO) if there is no allocation that Pareto-dominates $\x$.

\paragraph{Related models.}
\begin{enumerate}
\item $\private$. The classic problem of \textit{private goods allocation} concerns partitioning a set of goods $\G$ among the set $\A$ of agents. Thus, a feasible allocation $\x$ is an $n$-partition $(\x_1, \dots, \x_n)$ of $\G$, where agent $i$ is allotted $\x_i\subseteq \G$, and derives utility $v_i(\x_i)$ only from $\x_i$.
\item $\decision$. In this model, a set $\A$ of agents are required to make \textit{decisions} on a set $\G$ of issues. Each issue $j \in \G$ has a set $\G_j$ of $k_j$ alternatives, given by $\G_j := \{(j,1), (j,2),\dots,(j,k_j)\}$. A feasible allocation or outcome $\x = (x_1,\dots,x_m)$ comprises of $m$ decisions, where $x_j\in [k_j]$ is the decision made on issue $j$. Assuming the valuations are additive, each agent has a value $v_i(j, \ell)$ for the $\ell^{th}$ alternative of issue $j$. The valuation of the agent for the outcome $\x$ is then $v_i(\x) = \sum_{j \in \G} v_{i}(j,x_j)$. 
\end{enumerate}

\section{Relating the models}\label{sec:relations}
We first show rigorous mathematical connections between the $\private, \public$ and $\decision$ models w.r.t. computing optimal MNW and leximin allocations. 
\begin{theorem} \label{thm:mnwpgtopd}
$\publicmnw$ polynomial-time reduces to $\decisionmnw$.
\end{theorem}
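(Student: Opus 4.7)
The plan is to reduce a $\publicmnw$ instance $\I = (\A, \G, k, \{v_i\})$ to a $\decisionmnw$ instance $\I'$ in which one binary issue $I_j$ is created for each good $j \in \G$, with alternatives $\text{in}$ (agent $i$ values it at $v_{ij}$) and $\text{out}$ (valued at $0$). An outcome $\x'$ then corresponds to the subset $\x := \{j : \text{decision on } I_j \text{ is in}\}$ and satisfies $v_i(\x') = v_i(\x)$ for every original agent. This preserves valuations but not the cardinality constraint $|\x| \le k$, which I will enforce through the Nash product itself by adding dummy agents.

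Specifically, I add $Lk$ \emph{Type-1} dummies (value $\text{in}$ at $1$ and $\text{out}$ at $0$ on every issue) and $L(m-k)$ \emph{Type-2} dummies (value $\text{in}$ at $0$ and $\text{out}$ at $1$), where $L$ is a polynomially bounded parameter to be chosen. Setting $\ell = |\x|$, the Nash product of $\x'$ becomes $\prod_{i \in \A} v_i(\x) \cdot f(\ell)$ with $f(\ell) := \ell^{Lk}(m-\ell)^{L(m-k)}$. The function $f$ is strictly log-concave and uniquely maximized at $\ell = k$ by the first-order condition $Lk/\ell = L(m-k)/(m-\ell)$, so among outcomes with $|\x| = k$ the MNW-optimum of $\I'$ equals the $\publicmnw$-optimum of $\I$. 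After the standard preprocessing that removes goods valued $0$ by every agent, the $\publicmnw$-optimum has $|\x| = k$ and $P^* := \prod_i v_i(\x^*) \ge 1$, since $k \ge n$ guarantees every agent can be given a positively-valued good.

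The key quantitative step is choosing $L$ large enough that $f(k) \cdot P^* > f(\ell) \cdot V_{\max}^n$ for every $\ell \ne k$, where $V_{\max}$ bounds $\prod_i v_i(\x)$. A Taylor expansion of $\log f$ around $\ell = k$ yields $\log(f(k)/f(\ell)) = \Omega(L\,(1/k + 1/(m-k))) = \Omega(L/m)$ for the worst neighbors $\ell = k \pm 1$, so $L = \Theta(nm \log V_{\max})$ suffices and the total number of dummies $Lm$ remains polynomial in the input. All dummy valuations lie in $\{0,1\}$, so the reduction preserves binary valuations---a feature that will later give NP-hardness of $\decisionmnw$ under binary valuations. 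The main obstacle in the plan is precisely this Taylor lower bound on $f(k)/f(\ell)$, which ensures that $L$ can be taken polynomial rather than exponential. A minor additional subtlety is Nash-welfare tie-breaking: outcomes with $\ell \in \{0, m\}$ force some dummy to have value $0$, so they have strictly fewer positive-utility agents than outcomes with $0 < \ell < m$ making every original agent positive (which exist since $k \ge n$); hence tie-breaking selects from the intended regime, and within it the $f$-maximum at $\ell = k$ combined with the product maximization recovers an MNW allocation of $\I$.
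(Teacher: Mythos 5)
Your proposal is correct and follows essentially the same route as the paper: one binary issue per good, $\Theta(k\cdot\mathrm{poly})$ ``in''-loving and $\Theta((m-k)\cdot\mathrm{poly})$ ``out''-loving dummy agents so that the dummy contribution $\ell^{Lk}(m-\ell)^{L(m-k)}$ is uniquely maximized at $\ell=k$, and a logarithmic gap bound of order $L/m$ showing that a polynomially bounded multiplier (your $L$, the paper's $T=\lceil 2mn\log mV\rceil$) makes this term dominate the variation in the original agents' Nash product. The paper's Claim~\ref{clm:pubgoodstopubdec} is exactly your quantitative step, proved via elementary logarithm estimates rather than a Taylor expansion.
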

\begin{proof}
Let $\I = (\A,\G,k,\{v_i\}_{i\in \A})$ be an instance of the $\public$ model. 
For $k=m$, the MNW problem is trivial, since we can select all the $m$ goods. 
For $n \leq k < m$, we can construct an instance $\I' = (\A',\G',\{\G_j\}_{j\in \G'}\{v'_i\}_{i\in \A'})$ of $\decision$ from $\I$ in polynomial time, such that given an MNW allocation of $\I'$, we can compute an MNW allocation of $\I$ in polynomial time.
Let $V = \max_{i,j} v_{ij}$. We create $m$ public issues: corresponding to each good $j\in \G$, we create an issue $j$ with two alternatives $(j,1)$ and $(j,2)$. That is, $\G' = [m]$, and $\G_j = \{(j,1),(j,2)\}$ for $j\in \G'$. We create $\A'=[n+mT]$, where $ T = \lceil 2mn\log mV \rceil$. The first $n$ agents here correspond to the $n$ agents in $\I$. The last $mT$ agents are of two types: $kT$ agents $\{n+1,\dots, n+kT\}$ of type $A$, and $(m-k)T$ agents $\{n+kT+1,\dots,n+mT\}$ of type $B$. The valuations are as follows: each agent $i\in[n]$ values alternative `$1$' of the issue $j\in \G'$ at $v_{ij}$, the agents of type $A$ value only alternative `$1$', agents of type $B$ value only alternative `$2$'. Formally, for $i\in \A'$, and an alternative $(j,c)$ of the issue $j \in \G'$, where $c\in \{1,2\}$:
\[
v'_i(j,c) = \begin{cases}
v_{ij}, \text{ if } c=1 \text{ and } i\in [n]; \\
1, \text{ if } n < i \le n+kT \text{ and } c = 1;\\
1, \text{ if } n+kT < i \le n+mT \text{ and } c = 2;\\
0, \text{ otherwise. }
\end{cases}
\]

Let $\x'$ be an allocation for the instance $\I'$. For $c\in\{1,2\}$, let $S_c$ be the set of issues $j$ with decision $c$ in $\x'$. That is, $S_c = \{j\in [m] : \x'_j = c\}$. Let $k' = |S_1|$. Then we have:

\[\NW(\x') = \bigg(\prod_{i\in[n]} v'_i(\x') \cdot (k')^{kT} \cdot (m-k')^{(m-k)T} \bigg)^{\frac{1}{n+mT}}.\]

We now relate $\x'$ to the $\public$ instance $\I$. The decision $(j,1)$ corresponds to selecting the public good $j$. Let $\x = S_1 \subseteq \G$ be the corresponding set of public goods. Then for any $i\in [n]$ we have that $v_i(\x) = v'_i(\x')$, since $v'_i(j,2) = 0$ for every $j\in [m]$. Thus:
\begin{equation}\label{eqn:pg-pdm-nsw}
\NW(\x') = \big(\NW(\x)^n \cdot (k')^{kT} \cdot (m-k')^{(m-k)T} \big)^{\frac{1}{n+mT}}.
\end{equation}

We now have to prove that $\x$ satisfies $\vert \x \vert \le k$. Let $W_\ell$ be the Nash product of any MNW allocation for the $\public$ instance $\I_\ell = (\A,\G,\ell,\{v_i\}_{i\in \A})$, $0\le \ell \le m$. Clearly, $0 = W_0 \le W_1 \le \dots W_m \le (mV)^n$. As $k \geq n$, $W_k \ge 1$, since we assume every agent has at least one good that she values positively. Define $g : [m] \rightarrow \mathbb{Z}$, as $g(a) = a^k (m-a)^{m-k}$. Then if $\x'$ is an MNW allocation for $\I'$,  \eqref{eqn:pg-pdm-nsw} becomes:
\begin{equation}\label{eqn:pg-pdm-nsw-2}
\NW(\x') = (W_{k'}\cdot g(k')^T)^{1/(n+mT)}.
\end{equation}

Let $G_1$ and $G_2$ denote the largest and second-largest values that $g$ attains over its domain. We observe that $g$ increases in $[0,k]$, and decreases in $[k,m]$. Hence, $G_1 = g(k)$ implying:
\begin{equation*}
\begin{aligned}
G_1 &= k^k(m-k)^{m-k};
G_2 = \max(g(k-1),g(k+1)).
\end{aligned}
\end{equation*}
\noindent We now show that that:
\begin{restatable}{claim}{clmpgtopgd}\label{clm:pubgoodstopubdec}
$G_1^T > W_m \cdot G_2^T$.
\end{restatable}
\begin{proof}
Recall that $W_\ell$ denotes the Nash product of any MNW allocation for the $\public$ instance $\I_\ell = (\A,\G,\ell,\{v_i\}_{i\in \A})$, for $0\le \ell \le m$. We have $0 = W_0 \le W_1 \le \dots W_m \le (mV)^n$, and we assume $W_k \ge 1$. Recall that function $g : [m] 
\rightarrow \mathbb{Z}$, was defined as $g(a) = a^k (m-a)^{m-k}$. 

Let $G_1$ and $G_2$ denote the largest and second-largest values that $g$ attains over its domain. We observe that $g$ increases in $[0,k]$, and decreases in $[k,m]$. Hence:
\begin{equation*}
\begin{aligned}
G_1 &= g(k) = k^k(m-k)^{m-k}. \\
G_2 &= \max(g(k-1),g(k+1)).
\end{aligned}
\end{equation*}
Now observe that for $k\in [m]\setminus\{0,1,m\}$:
\begin{equation*}
\begin{aligned}
\log g(k) - \log g (k-1) &= k(\log k - \log(k-1)) + (m-k)(\log(m-k) \\
&\,\,- \log(m-k+1)), \\
&> k\cdot\frac{1}{k-\frac{1}{2}}+ (m-k)\cdot\frac{-1}{m-k} \ge \frac{1}{2k-1} \ge \frac{1}{2m},
\end{aligned}
\end{equation*}
and for $k\in[m]\setminus\{0,m-1,m\}$:
\begin{equation*}
\begin{aligned}
\log g(k) - \log g(k+1) &= k(\log k - \log(k+1)) + (m-k)(\log(m-k)\\
&\,\,- \log(m-k-1)), \\
&> k\cdot\frac{-1}{k}+ (m-k)\cdot\frac{1}{m-k-\frac{1}{2}}, \\
&\ge \frac{1}{2(m-k)-1} \ge \frac{1}{2m},
\end{aligned}
\end{equation*}
using standard properties of logarithms. Thus:
\[\log G_1 - \log G_2 > \frac{1}{2m}.\]
Then we have by recalling that $T = 2mn\log mV$,
\[T (\log G_1 - \log G_2) > 2mn\log mV\cdot\frac{1}{2m} \ge \log W_m,
\]
which gives:
\[G_1^T > W_m \cdot G_2^T, \]
as required.
Lastly, we consider the cases of $k=1$ and $k=m-1$. In both cases, $T(\log G_1 -\log G_2) = T[(m-1)\log(m-1) - \log 2-(m-1)\log(m-2)] > 2mn\log mV\frac{1}{2m} \ge \log W_m$, which gives $G_1^T > W_m G_2^T$, as claimed. 
\end{proof}

\noindent Using Claim~\ref{clm:pubgoodstopubdec} we have for all $k'\in[m]\setminus\{k\}$:
\[W_k \cdot g(k)^T \ge G_1^T > W_m \cdot G_2^T \ge W_{k'} \cdot g(k')^T, \]

Hence, the quantity $W_{k'}\cdot g(k')^T$ is maximized when $k' = k$. Recalling \eqref{eqn:pg-pdm-nsw-2}, we conclude that for the MNW allocation $\x'$ of $\I'$, the corresponding set $\x$ has cardinality exactly $k$. Further $\x$ also maximizes the NW among all allocations of the instance $\I$ satisfying this cardinality constraint. Thus, $\x$ in fact is an MNW allocation for $\I$. Finally, it is clear that this is a polynomial time reduction.
\end{proof}

We next relate the MNW problem in the $\private$ model with the $\public$ model. 

\begin{restatable}{theorem}{thmmnwprivtopublic}\label{thm:mnw-priv-to-public}
$\privatemnw$ polynomial-time reduces to $\publicmnw$.
\end{restatable}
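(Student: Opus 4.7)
The plan is to encode the partition structure of a $\private$ allocation as a cardinality-constrained selection: I introduce one \emph{copy good} per (agent, good) pair so that selecting $(i,j)$ corresponds to assigning good $j$ to agent $i$, and a cluster of \emph{block dummies} forces the selected set to respect the partition structure.

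Given a $\private$ instance $\I=(\A,\G,\{v_i\})$ with $|\A|=n$ and $|\G|=m$, I build a $\public$ instance $\I'$ on ground set $\G'=\{(i,j):i\in[n],\,j\in[m]\}$ with cardinality bound $k'=m$. The agents are $\A'=[n]\cup\{d_j^{(t)}:j\in[m],\,t\in[T]\}$ for $T:=n+1$. Real agent $i$ values only her own column, $v'_i((i,j))=v_{ij}$ and $v'_i((i',j))=0$ for $i'\ne i$; each block dummy $d_j^{(t)}$ values every copy of good $j$ at $1$ and all other copies at $0$. From a public MNW allocation $\x'$ I read off the private allocation $\x=(\x_1,\ldots,\x_n)$ via $\x_i:=\{j:(i,j)\in\x'\}$.

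Writing $c_j:=|\{i:(i,j)\in\x'\}|$, we have $\sum_j c_j\le m$; each dummy $d_j^{(t)}$ has utility $c_j$, and real agent $i$ has utility $\sum_{j:(i,j)\in\x'}v_{ij}$. The key claim is that every MNW allocation is \emph{balanced}, i.e., $c_j=1$ for all $j$. If some allocation achieves positive Nash welfare, then so does the MNW allocation, forcing every dummy to receive positive utility; hence $c_j\ge1$ for each $j$, which combined with $\sum_j c_j\le m$ yields $c_j=1$ everywhere. Otherwise the fallback convention applies: a balanced allocation realizing the best private assignment achieves $mT+r^*$ positive-utility agents (with $r^*$ the maximum number of positively served agents in any private allocation), whereas any allocation leaving $p\ge1$ blocks empty realizes at most $(m-p)T+n$, which is strictly less as soon as $T>n-r^*$; the choice $T=n+1$ suffices. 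Once $\x'$ is balanced it corresponds bijectively to the private allocation $\x$, the block dummies contribute a factor $\prod_j c_j^T=1$ to the Nash product, and the public Nash product equals the private Nash product, so public MNW allocations correspond exactly to private MNW allocations.

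The reduction runs in polynomial time as $|\G'|=nm$ and $|\A'|=n+m(n+1)$, and it preserves binary valuations since all dummy values are in $\{0,1\}$, which is what later yields NP-hardness for binary $\publicmnw$. The main obstacle is the fallback regime where Nash welfare is identically zero: both the private and public MNW definitions then rely on the lexicographic maximize-count-then-product rule, and I need $T$ to be polynomially large enough that count maximization cannot be gamed by sacrificing a block. The pivot is precisely that $T=n+1$ makes balanced allocations strictly dominate imbalanced ones on count and closes the argument.
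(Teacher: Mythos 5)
Your reduction is essentially the paper's: one copy of each good per agent, cardinality bound $m$, and per-good dummy agents who value all copies of that good, so that any MNW allocation must pick exactly one copy per block and hence encodes a partition whose Nash product matches the private one. The only differences are cosmetic --- you use $T=n+1$ dummies per block where the paper uses $2$, and you handle the degenerate all-zero-Nash-welfare case by a global count comparison rather than a local exchange argument (arguably a slightly cleaner case split than the paper's $m\ge n$ versus $m<n$) --- and the argument is sound.
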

\begin{proof}
Let $\I=(\A=[n],\G=[m],V)$ be a $\private$ instance, using which we create a $\public$ instance $I'$ as follows. We create $n+2m$ agents, i.e. $\A' = [n+2m]$. The first $n$ agents correspond to the $n$ agents in $\I$. The last $2m$ are dummy agents. We create $n \cdot m$ public goods: for each good $j\in [m]$, we create a set of $n$ copies $S_j = \{j_1,j_2,\dots,j_n\}$, $\G' = \bigcup_{j\in \G} S_j$. We set $k = m$. The valuations for $i\in \A'$, $j_\ell \in \G'$ are:
\[
v'_i(j_\ell) = \begin{cases}
v_{ij}, \text{ if } i=\ell \text{ and } i\in [n]; \\
1, \text{ if } i \in \{ n + 2j - 1, n +  2j\}; \\
0, \text{ otherwise, }
\end{cases}
\]
i.e. each agent $i\in [n]$ values exactly one copy, $j_i$ for each $j\in \G$ at $v_{ij}$, and for each good $j\in \G$, there are exactly two dummy agents who value all copies of $j$.

We now state and use the following claim, and prove it immediately after the proof of Theorem~\ref{thm:lmprtopg}.
\begin{claim} \label{claim:differentsets}
Any MNW allocation $\x'$ of $\I'$ does not select two goods from same $S_j, j \in [m]$.
\end{claim}

Consider any MNW allocation $\x'$ of $\I'$. We construct a partition, $\x$ of goods for $\I$ from this in the following way. For $i\in[n]$, $j\in[m]$, define $x_{ij} = 1$ if $j_i \in \x'$, and 0 otherwise. Let $\x_i = \{j \in \G : x_{ij} = 1\}$. Thus, the value that agent $i$ gets in $\x$ is
\begin{equation*}
\begin{aligned}
v_i(\x_i) = \sum_{j\in \G} v_{ij} x_{ij} &= \sum_{j \in \G} v_{ij} \mathbf{1}(j_i \in \x')
= \sum_{j \in \G} v'_{i}(j_i) \mathbf{1}(j_i \in \x')
= v'_i(\x').
\end{aligned}
\end{equation*}
Thus, if $m \geq n$, $\NW(\x) = {\NW(\x')}^{(n+2m)/n}$ and the partition corresponding to $\x'$ as defined above gives an MNW solution for $\I$. On the other hand, if $m < n$, then $\x'$ already gives non-zero value to all dummy agents by Claim \ref{claim:differentsets}. Thus, to maximize the total number of agents who get non-zero value, it maximizes the number of agents in $[n]$ who get non-zero value. Call this set $S^*$. Thus partition $\x$ has maximum number of agents getting a non-zero value. Finally, it maximizes the Nash product over $S^* \cup \{n+1, \ldots, n+2m\}$. Claim \ref{claim:differentsets} also implies that all dummy agents get value $1$. Thus, $\prod_{i \in S^*} v_i(\x_i) = \prod_{i \in S^*} v_i(\x')$. Thus even in this case the allocation $\x$ corresponds to an MNW allocation in $\I$. 
\end{proof}

\begin{proof}[Proof of Claim \ref{claim:differentsets}]
Consider first $m \geq n$. Suppose $\exists j\in [m]$ for which two goods $j_i, j_{i'} \in \x', i \neq i'$. Since exactly $m$ goods are picked in $\x'$, there is some $j'\in [m]$, for which no good $j'_i$ is picked in $\x'$ for any $i\in [n]$. This implies that the agents $2j'+n-1, 2j'+n$ get zero value in $\x'$, making $\NW(\x') = 0$. However, choosing a good from each $j \in [m]$ gives non-zero value to all dummy agents. At the same time, since $m \geq n$, these goods can be chosen so that they give non-zero value to distinct agents in $[n]$. This makes $\NW(\x') \neq 0$ contradicting Nash optimality of $\x'$.

Now, if $m < n$ Nash welfare of all allocations in $\I$ is $0$. Thus, the MNW allocation is the one that maximizes the number of agents who get non zero value and then maximizes the product of values for these agents. Consider any allocation $\Bar{\x}$, suppose $\exists j\in [m]$ for which two goods $j_i, j_{i'} \in \Bar{\x}, i \neq i'.$ then again for some $j'$, agents $n+2j' - 1$ and $n+2j'$ get value 0 making $\NW(\Bar{\x}) = 0$. At the same time, even if $\Bar{\x}$ has goods from all different $S_j$, since $m < n$, and each one item from $S_j$ gives value only to one agent $i \in [n]$, the $\NW(\Bar{\x}) = 0$ even in this case. Thus, if $m < n$, all allocations have Nash welfare $0$ in $\I'$ also. Suppose the MNW allocation, $\x'$ had two goods from same $S_j$ for some $j \in [m]$. Then, there exists a $j' \in [m]$ such that no good is selected from $S_{j'}$. The two goods from $S_j$ give value to exactly four agents - the two dummy agents $2j+n-1, 2j+n$ and two agents who receive their copy of good $j$. Instead, if we exchange one of these goods to a good from $S_{j'}$, we give non-zero value to at least five agents - dummy agents $2j + n - 1, 2j + n, 2j'+n-1, 2j'+n$ and at least one of the agents in $[n]$. We did not change the value of any other agents in this process. Thus, we increase the number of agents who get non-zero value, contradicting the maximality of $\x'$. Thus, in both cases, all $m$ goods are picked from different $S_j, j \in [m]$. 
\end{proof}

\begin{observation}\label{rem:reductionvalues}
A desirable feature of the above reductions for the MNW problem from instance $\I = (\A,\G, V)$ to $\I'=(\A',\G',V')$ is that $V' = V \cup \{0,1\}$, i.e., the reduction only creates instances $\I'$ which have 0 and 1 as the only potentially additional values as compared to $\I$. We use this feature in establishing the computational complexity of computing an MNW allocation in the $\decision$ model with binary values, see Corollary~\ref{cor:pdmhardness}.
\end{observation}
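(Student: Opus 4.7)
The plan is simple because Observation~\ref{rem:reductionvalues} is a structural remark about the specific constructions given in Theorems~\ref{thm:mnwpgtopd} and \ref{thm:mnw-priv-to-public} rather than a claim requiring new combinatorial content. I would verify it by direct inspection of the two reductions and then note the implication that makes the observation useful.

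First I would revisit the construction in the proof of Theorem~\ref{thm:mnwpgtopd}. The valuation $v'_i(j,c)$ there is defined by a four-way case analysis: it equals $v_{ij}$ when $i\in[n]$ and $c=1$, equals $1$ on the distinguished alternative for a type-$A$ or type-$B$ dummy agent, and equals $0$ otherwise. Every value appearing in $\I'$ therefore lies in $V\cup\{0,1\}$, giving $V'\subseteq V\cup\{0,1\}$ for this reduction. No other quantities enter the construction; the integer $T=\lceil 2mn\log mV\rceil$ controls only the number of dummy agents and the cardinality $k$, neither of which contributes new valuation values.

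Next I would repeat the inspection for the reduction in Theorem~\ref{thm:mnw-priv-to-public}. There the function $v'_i(j_\ell)$ likewise takes three possible values: $v_{ij}$ when $i=\ell\in[n]$, $1$ on copies of goods that a dummy agent tracks, and $0$ otherwise. Hence again $V'\subseteq V\cup\{0,1\}$. Composing the two reductions (which is what is implicitly used to deduce $\privatemnw \le \decisionmnw$) preserves this property: at each stage only the constants $0$ and $1$ can be freshly introduced, so after composition the resulting $\decision$ instance still has value set contained in $V\cup\{0,1\}$.

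Finally I would spell out why this bookkeeping matters: if the original $\private$ or $\public$ instance already has binary valuations, i.e.\ $V\subseteq\{0,1\}$, then the produced $\public$ or $\decision$ instance also has $V'\subseteq\{0,1\}$. This is exactly the hook that will allow the NP-hardness of $\publicmnw$ with binary valuations (Theorem~\ref{thm:mnw-main}) to lift via Theorem~\ref{thm:mnwpgtopd} to NP-hardness of $\decisionmnw$ with binary valuations, yielding Corollary~\ref{cor:pdmhardness}. There is no real obstacle here; the only care needed is to confirm that the auxiliary parameters in the two reductions (numbers of dummy agents, cardinality thresholds, copies of goods) never generate a valuation outside $V\cup\{0,1\}$, which is immediate from the case-by-case definitions.
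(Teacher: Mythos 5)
Your verification is correct and matches what the paper intends: the Observation is justified purely by inspecting the case-by-case definitions of $v'_i$ in the reductions of Theorems~\ref{thm:mnwpgtopd} and~\ref{thm:mnw-priv-to-public}, noting that only the constants $0$ and $1$ are freshly introduced, and that this persists under composition so binary instances map to binary instances. (Minor nit: the NP-hardness of $\publicmnw$ under binary valuations that gets lifted to Corollary~\ref{cor:pdmhardness} is established in Theorem~\ref{thm:np-k>n}, not Theorem~\ref{thm:mnw-main}, though the paper's own table has the same mislabel.)
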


We also show similar polynomial-time reductions between the three models for the problem of computing a leximin-optimal allocation.

\begin{restatable}{theorem}{thmlmpgtopd}\label{thm:lmpgtopd}
$\publiclex$ polynomial-time reduces to $\decisionlex$.
\end{restatable}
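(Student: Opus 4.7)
My plan is to adapt the MNW-to-MNW reduction of Theorem~\ref{thm:mnwpgtopd}, augmenting it with a constant shift on the original valuations that handles leximin's strong sensitivity to zero utilities. Starting from $\I = (\A,\G,k,\{v_i\}_{i\in\A})$ with $n \le k < m$ (the case $k=m$ is trivial), I would construct $\I'$ with $m$ issues, each having alternatives $(j,1)$ (``$j$ selected'') and $(j,2)$ (``$j$ not selected''). Setting $C := \max(k, m-k) + 1$, the $n + mT$ agents consist of: (i) for each $i \in [n]$ an original agent valuing $(j,1)$ at $v_{ij} + C$ and $(j,2)$ at $C$; (ii) $kT$ type-A dummies valuing $(j,1)$ at $m-k$ and $(j,2)$ at $0$; (iii) $(m-k)T$ type-B dummies valuing $(j,1)$ at $0$ and $(j,2)$ at $k$; where $T \ge 1$ is any integer. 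The construction is clearly polynomial in the input.

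Given an outcome $\x'$ of $\I'$, let $\ell := |\{j \in [m] : x'_j = 1\}|$ and $\x := \{j : x'_j = 1\}$. Direct computation gives: each type-A dummy has utility $(m-k)\ell$, each type-B dummy has $k(m-\ell)$, and each original agent $i$ has $v_i(\x) + Cm$ (since every issue contributes $C$ to $i$'s utility regardless of its decision). Two structural facts then drive the analysis. First, $Cm > m\cdot \max(k, m-k)$ strictly exceeds the largest possible dummy utility across any $\ell$, so every original agent is strictly above every dummy in the sorted utility vector; hence the first $mT$ entries of the sorted vector are all dummies. Second, $\min\bigl((m-k)\ell,\, k(m-\ell)\bigr)$ is the minimum of a strictly increasing and a strictly decreasing linear function of $\ell$ that meet at $\ell = k$ with common value $k(m-k)$; it is therefore \emph{strictly} less than $k(m-k)$ for every $\ell \ne k$.

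Combining these, any outcome with $\ell \ne k$ is leximin-dominated by any outcome with $\ell = k$: at position $1$ of the sorted vector, an $\ell = k$ outcome has value $k(m-k)$, strictly greater than the min-dummy at $\ell \ne k$. Hence every leximin-optimal outcome of $\I'$ has $\ell = k$. Conditional on $\ell = k$, all $mT$ dummies share the identical utility $k(m-k)$, so the first $mT$ sorted entries are fixed across the choice of $k$-subset; leximin then reduces to comparing $(v_i(\x) + Cm)_{i \in [n]}$, and since the shift by $Cm$ preserves sort order, this is exactly the leximin criterion for $\I$ restricted to $|\x| = k$, which is equivalent to the full $\publiclex$ criterion by monotonicity of additive valuations. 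Reading off $\x := \{j : x'_j = 1\}$ from a leximin-optimal outcome of $\I'$ thus yields a leximin-optimal allocation of $\I$. The hard part of this proof, and the key new ingredient relative to the MNW reduction, is the shift $C$: without it, leximin in the unconstrained $\decision$ instance could prefer an outcome with $\ell > k$ (which eliminates an otherwise unavoidable zero-utility original agent) over any $\ell = k$ outcome; verifying that $C = \max(k,m-k)+1$ suffices comes down to checking the extreme values $\ell \in \{0,m\}$ and the neighbors $\ell \in \{k-1, k+1\}$.
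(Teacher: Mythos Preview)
Your argument is correct, and it takes a genuinely different route from the paper's proof. The paper adds only \emph{two} dummy agents: agent $n+1$ values each alternative $(j,1)$ at $\alpha(m-k)$ and agent $n+2$ values each $(j,2)$ at $\alpha k$, for a constant $\alpha < 1/m^2$. Because the original valuations are integers and, using $k \ge n$, one can exhibit an outcome in which every original agent gets utility at least $1$, the leximin optimum gives every original agent at least $1$ while both dummies sit strictly below $1$; the same increasing/decreasing argument you give then forces $k' = k$, after which the original agents' entries determine the rest of the leximin comparison.

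Your shift $C$ plays the role of the paper's small $\alpha$: instead of making the dummies tiny, you make the originals large. This buys you two things. First, all values stay integral (the paper's $\alpha(m-k)$ and $\alpha k$ are not integers, so strictly speaking that construction leaves the integer-valuation model unless one rescales everything). Second, your separation between originals and dummies holds for \emph{every} outcome, not just the optimum, so you never need the auxiliary existence claim that all original agents can be made simultaneously positive; as a side effect your reduction works verbatim even when $k < n$. On the other hand, the paper's construction is more parsimonious: two extra agents suffice, whereas you carry over the $mT$-dummy block from the MNW reduction. Since your leximin argument never uses that multiplicity (any $T \ge 1$ works, and in fact one dummy of each type is enough), you could trim the construction down to match.
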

\begin{proof}
Let $\I = (\A,\G,k,\{v_i\}_{i\in \A})$ be an instance of the $\public$ model. For $k=m$, the leximin problem is trivial, since we can select all the $m$ goods. When $k \geq n$, we can construct an instance $\I' = (\A',\G',\{\G_j\}_{j\in \G'}\{v'_i\}_{i\in \A'})$ of the $\decision$ model from $\I$ in polynomial time, such that given a leximin allocation of $\I'$, we can compute a leximin allocation of $\I$ in polynomial time. To construct $\I'$, we first create a set $\A'$ of $n+2$ agents. The first $n$ agents here correspond to the $n$ agents in $\I$. The last $2$ agents are used in the construction, and ensure that exactly $k$ goods are selected in $I$.

We next create $m$ public issues: for each good $j\in \G$, we create an issue $j$ with two alternatives $(j,1)$ and $(j,2)$. That is, $\G' = [m]$, and $\G_j = \{(j,1),(j,2)\}$ for $j\in \G'$.

The valuations are as follows: for an agent $i\in \A'$, and an alternative $(j,c)$ of the issue $j \in \G'$, where $c\in \{1,2\}$:
\[
v'_i(j,c) = \begin{cases}
v_{ij}, \text{ if } c=1 \text{ and } i\in [n]; \\
\alpha\cdot (m-k), \text{ if } i = n+1 \text{ and } c = 1;\\
\alpha\cdot (k), \text{ if } i = n+2 \text{ and } c = 2;\\
0, \text{ otherwise. }
\end{cases}
\]
where $\alpha < 1/m^2$ is a sufficiently small constant. Essentially, each agent $i\in[n]$ values the `$1$' decisions of the issue $j\in \G'$ at $v_{ij}$, the agent $n+1$ values only the `$1$' decisions, and agent $n+2$ values only the `$2$' decisions.

Let $\x'$ be a leximin allocation for the instance $\I'$. Clearly $v'_i(\x') > 0$ for all agents, since there is some allocation that gives positive utility to all agents, and the minimum utility only improves in the leximin solution. In particular $v'_i(\x')_i \ge 1$ for all $i\in[n]$. For $c\in\{1,2\}$, let $S_c$ be the set of issues $j$ with decision $c$ in $\x'$. That is, $S_c = \{j\in [m] : \x'_j = c\}$. Let $k' = |S_1|$.  we note that $v'_{n+1}(\x') = \alpha (m-k)k'$, and $v'_{n+2}(\x') = \alpha k(m-k')$. Since $\alpha<1/m^2$, for each $i\in[n], b\in[2]$, we have $v'_{n+b}(\x') < v'_i(\x')$. Suppose $k'\neq k$. Then any allocation $\x''$ with $|\{j : \x''_j = 1\}|=k$ gives $v'_{n+1}(\x'') = v'_{n+2}(\x'') = \alpha k(m-k)$, which is a leximin improvement over $\x'$, since $\min(v'_{n+1}(\x''),v'_{n+2}(\x'')) > \min(v'_{n+1}(\x'),v'_{n+2}(\x'))$. Hence $k'= k$.

We now explain how we can relate $\x'$ of $\I'$ to the public goods instance $\I$. Intuitively, the decision $(j,1)$ corresponds to selecting the public good $j$, and $(j,2)$ corresponds to not selecting $j$. Let $\x = S_1 \subseteq \G$ be a set of public goods of cardinality $k'$. Then for any $i\in [n]$ we have that $v_i(\x) = v'_i(\x')$, since $v'_i(j,2) = 0$ for every $j\in [m]$. Further since $k'=k$, $|\x| = k$. Hence $\x$ is a feasible solution for $I$. Since for all $i\in [n]$, $v_i(\x) = v'_i(\x')$, $\x$ is a leximin allocation for $I$.

Since the number of agents and goods created in the reduction are polynomially many in the size of the instance $\I$, and all other computations can also be carried out in polynomial time, this is a polynomial time leximin-preserving reduction. 
\end{proof}

\begin{restatable}{theorem}{thmlmprtopg}\label{thm:lmprtopg}
$\privatelex$ polynomial-time reduces to $\publiclex$.
\end{restatable}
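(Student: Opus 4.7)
The plan is to reuse the reduction from the proof of Theorem~\ref{thm:mnw-priv-to-public} verbatim: given a $\private$ instance $\I=(\A=[n],\G=[m],V)$, construct a $\public$ instance $\I'$ with $\A'=[n+2m]$ (the first $n$ agents ``real'', the last $2m$ grouped into $m$ ``dummy'' pairs), goods $\G' = \bigcup_{j\in[m]} S_j$ where $S_j = \{j_1,\dots,j_n\}$, cardinality $k=m$, real agent $i$ valuing copy $j_i$ at $v_{ij}$ (zero on other copies), and the two dummies $n+2j-1,\, n+2j$ valuing every copy in $S_j$ at $1$ (zero elsewhere). Given a leximin-optimal public allocation $\x'$ of $\I'$, recover a private partition $\tilde{\x}$ by $\tilde{\x}_i := \{j : j_i \in \x'\}$ and claim this is leximin-optimal for $\I$.

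The central structural claim is that every leximin-optimal $\x'$ selects \emph{exactly one} copy from each $S_j$. Granted this, one-per-set public allocations are in bijection with private partitions: real utilities match ($v'_i(\x') = \sum_j v_{ij}\mathbf{1}(j_i \in \x') = v_i(\tilde{\x}_i)$) and every dummy receives utility exactly $1$. Since adding a common multiset (here $2m$ copies of $1$) to two sorted utility vectors preserves leximin dominance -- a short interleaving argument shows the common elements occupy the same merged positions on both sides -- public leximin comparison between one-per-set allocations coincides with private leximin comparison between the corresponding partitions. Hence the recovered $\tilde{\x}$ is leximin-optimal for $\I$.

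To prove the structural claim, fix a leximin-optimal private partition $\tilde{\x}$ and let $\x^*$ be its canonical one-per-set lift in $\I'$. The number of zero-utility agents in $\x^*$ equals $Z^* := n - P^*$, where $P^*$ is the maximum number of real agents who can simultaneously receive positive utility in any private partition (equivalently, the maximum matching in the bipartite graph with edges $\{(i,j) : v_{ij}>0\}$, which private leximin achieves as its first priority). For any non-one-per-set allocation $\y'$, let $n_j$ be its number of picks in $S_j$ and $e'' := |\{j : n_j=0\}| \ge 1$; for each covered real agent $i$, fix some $j(i)$ with $j(i)_i \in \y'$ and $v_{i,j(i)}>0$, and set $d_j := |\{i : j(i)=j\}|$. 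Then $d_j \le n_j$; choosing one representative per $j$ with $d_j \ge 1$ yields a valid matching of size $|\{j : d_j \ge 1\}| \le P^*$, and the number of covered agents satisfies
\[
P(\y') = \sum_j d_j \;\le\; |\{j : d_j \ge 1\}| + \sum_j (n_j-1)^+ \;\le\; P^* + e'',
\]
using $\sum_j (n_j-1)^+ = m - (m-e'') = e''$. Including the $2e''$ zero dummies contributed by empty sets, the total zero count in $\y'$ is at least $(n - P^* - e'') + 2e'' = Z^* + e'' > Z^* = z(\x^*)$, so $\x^*$ strictly leximin-dominates $\y'$ at the $(Z^*{+}1)$-th sorted coordinate.

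The main obstacle is the combinatorial inequality $P(\y') \le P^* + e''$, which captures that each extra pick within a set beyond one-per-set covers at most one additional real agent. Everything else is routine: the reduction runs in polynomial time, the recovery $\x' \mapsto \tilde{\x}$ is immediate, and the translation invariance of leximin is standard. This establishes $\privatelex \le \publiclex$.
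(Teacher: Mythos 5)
Your proposal is correct and follows exactly the route the paper intends: the paper's own proof of this theorem is a one-line pointer to the reduction of Theorem~\ref{thm:mnw-priv-to-public}, and you reuse that same construction, supplying the leximin-specific details (the one-copy-per-$S_j$ structure of leximin optima and the invariance of leximin dominance under appending a common multiset of dummy utilities) that the paper leaves implicit. The only nitpick is that your parenthetical justification of the invariance lemma ("common elements occupy the same merged positions on both sides") is not literally true in general, though the lemma itself is, e.g., via counting-function characterizations of leximin dominance.
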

\begin{proof}
The proof follows from essentially the same reduction used to show Theorem~\ref{thm:mnw-priv-to-public}. 
\end{proof}

\section{Properties of MNW and Leximin}\label{sec:prop}
We prove that MNW and leximin-optimal allocations satisfy desirable fairness and efficiency properties in the $\public$ model as well. First, we show some interesting relations between our three fairness notions -- $\prop, \prop 1$, and $\rrs$ in the $\public$ model where $k\ge n$.\footnote{Note that when $k < n$, $\s{RRS}$ is 0. Any agent who gets $0$ value satisfies $\s{Prop}1$ when $k < n$ trivially. Thus, $\s{RRS}$ and $\s{Prop}1$ coincide when $k < n$. On the other hand, the proportional value will be non-zero even when $k = 1$ if the agent likes at least one good. Thus, there can be no multiplicative relation between $\s{RRS}$ and $\prop$ when $k < n$.} Our results are presented in the table below. 

\begin{table}[ht]
{
\centering
\begin{tabular}{ |c|c|c|c| } 
\hline
& $\s{RRS}$ & $\s{Prop}$ & $\s{Prop}1$ \\
\hline 
$\s{RRS}$ & \checkmark & $\frac{n}{2n-1}$ (Lem.~\ref{lem:rrs-prop})& \checkmark (Lem.~\ref{lem:rrs-prop1}) \\ 
$\s{Prop}$ & $1/n$ (Lem.~\ref{lem:prop-rrs}) & \checkmark & \checkmark \\ 
$\s{Prop}1$ & \xmark (Ex.~\ref{ex:prop1-rrs})& \xmark (Ex.~\ref{ex:prop1-rrs}) & \checkmark \\ 
\hline
\end{tabular}
\caption{Relations between the fairness notions for $k\ge n$. Each cell $(R,C)$ contains a factor $\alpha$ s.t. any allocation satisfying the row property $R$ implies an $\alpha$-approximation to the column property $C$. Cells with $\alpha = 1$ are marked with \checkmark, and with $\alpha = 0$ are marked with \xmark.}
}
\end{table}
\begin{restatable}{lemma}{lemrrspropone}\label{lem:rrs-prop1}
Any allocation that satisfies $\rrs$ also satisfies $\prop 1$.
\end{restatable}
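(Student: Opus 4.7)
The plan is to show, for each agent $i$, that either $v_i(\x) \geq \prop_i$ (in which case Prop1 holds trivially by taking $g' = g$ for any $g \in \x$) or an explicit single swap brings $i$'s value up to $\prop_i$. Sort $i$'s values in non-increasing order and let $g_{(r)}, v_{i,r}$ denote the $r$-th good for $i$ and its value; set $q = \lfloor k/n \rfloor$, $T = \{g_{(1)}, \ldots, g_{(k)}\}$ (so $\prop_i = v_i(T)/n$), and $S_i = \{g_{(1)}, \ldots, g_{(q+1)}\}$. I will work with allocations of size $k$; if $|\x| < k$ one can pad $\x$ with any unselected goods without losing the RRS hypothesis, so this costs no generality.

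The first key ingredient is the auxiliary inequality $v_i(S_i) \geq \prop_i$. Partition $\{1, \ldots, k\}$ into $\lceil k/n \rceil \leq q+1$ consecutive blocks of size at most $n$; each block's value sum is at most $n$ times its top (leftmost) entry, and every such top entry lies in $S_i$. Summing over all blocks yields $v_i(T) \leq n \cdot v_i(S_i)$, equivalently $v_i(S_i) \geq \prop_i$.

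Now fix $i$ with $v_i(\x) < \prop_i$. The auxiliary inequality forces $S_i \not\subseteq \x$; otherwise $v_i(\x) \geq v_i(S_i) \geq \prop_i$. Let $g^* = g_{(j)}$ be the highest-valued good not in $\x$, so $g_{(1)}, \ldots, g_{(j-1)} \in \x$ and $j \leq q+1$, giving $v_i(g^*) = v_{i,j} \geq v_{i,q+1}$. Since $|\x| = k > j-1$, pick $g \in \x$ of minimum value among the $k - j + 1$ goods of $\x$ outside $\{g_{(1)}, \ldots, g_{(j-1)}\}$, and set $\x' = (\x \setminus \{g\}) \cup \{g^*\}$. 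Decompose $v_i(\x') = \sum_{r=1}^{j} v_{i,r} + \bigl(v_i(\x) - \sum_{r=1}^{j-1} v_{i,r} - v_i(g)\bigr)$. For $j = q+1$, the first summand equals $v_i(S_i) \geq \prop_i$ and the parenthesized residual is non-negative (since $g$ minimizes over a set of non-negative values), which finishes this case. For $j \leq q$, RRS gives $v_i(\x) - \sum_{r=1}^{j-1} v_{i,r} \geq \rrs_i - \sum_{r=1}^{j-1} v_{i,r} = v_{i,j} + \cdots + v_{i,q}$, and $v_i(g)$ is at most the average of $k - j + 1$ non-negative siblings, so the residual is at least $\tfrac{k-j}{k-j+1}(v_{i,j} + \cdots + v_{i,q})$; adding this to $\sum_{r=1}^{j} v_{i,r}$ and multiplying through by $n$, the target $v_i(\x') \geq \prop_i = v_i(T)/n$ reduces to a weighted rank-by-rank comparison that I will verify via the same block-partition trick used in the auxiliary bound.

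The main obstacle is the case $j \leq q$: a loose single-swap gain bound such as $v_{i,j} - v_{i,k+1}$ is insufficient because the deficit $\prop_i - v_i(\x)$ can be as large as $v_{i,q+1}$. The argument must exploit two features of $\x$ simultaneously: $g$ minimizes over a relatively large set (size $k - j + 1$), and RRS forces that set's total value to be at least $v_{i,j} + \cdots + v_{i,q}$. Together these limit how much value is lost by removing $g$, and combining with the tight block-partition estimate for $v_i(T)$ closes the inequality.
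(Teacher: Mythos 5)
Your plan follows the same route as the paper's own proof: let $j-1$ be the number of top-ranked goods of agent $i$ that form a prefix of $\x$, swap in the best missing good $g_{(j)}$ for the least valuable remaining good of $\x$, bound the loss by averaging over the $k-j+1$ candidates, and combine $\rrs$ with a block decomposition of the top-$k$ set. The easy cases (already proportional, $j=q+1$) are handled correctly, and the auxiliary bound $v_i(S_i)\ge \prop_i$ is true, though your justification is slightly off: the top entry of the $t$-th block is $g_{((t-1)n+1)}$, which does \emph{not} lie in $S_i$ for $t\ge 2$ in general; the bound survives because that entry's value is at most $v_{i,t}$ and there are at most $q+1$ blocks.

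The genuine gap is in the case $j\le q$, which is where all the work of the lemma lives. Your argument reduces the claim to the inequality $n\sum_{r=1}^{j}v_{i,r}+\tfrac{n(k-j)}{k-j+1}\sum_{r=j}^{q}v_{i,r}\ \ge\ \sum_{r=1}^{k}v_{i,r}$, and you defer its verification to ``the same block-partition trick.'' That trick does not close it: substituting the block bound $\sum_{r=1}^{k}v_{i,r}\le n\,v_i(S_i)$ reduces the target to $v_{i,j}-v_{i,q+1}\ \ge\ \tfrac{1}{k-j+1}\sum_{r=j}^{q}v_{i,r}$, which is false whenever $v_{i,1}=\dots=v_{i,q+1}>0$ (the left side is $0$, the right side is positive); the slack in the block bound is exactly too large here. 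The displayed inequality is in fact true, but it needs a different finishing argument --- for instance, since both sides are linear in the non-increasing, non-negative sequence $(v_{i,r})_r$, it suffices to compare prefix sums of coefficients, and the binding case (the full prefix of length $k$, with $k=nq+s$, $0\le s\le n-1$) reduces to $n(q+1)-\tfrac{n(q-j+1)}{k-j+1}\ge k$, which holds because $n(q-j+1)\le k-j+1$ for every $j\ge 1$. This is precisely the step the paper carries out explicitly via its inequalities \eqref{eqn:last_r} and \eqref{eqn:except_topd}; as written, your proof asserts rather than proves it, and the method you point to would fail, so the argument is incomplete at its crux even though the skeleton is salvageable.
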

\begin{proof}
Fix any agent $i$. Let $\x = \{h_1, h_2, \ldots, h_k\}$ be any allocation that satisfies $\rrs$. Let $\x_k^* = \{g_1, g_2, \ldots, g_k\}$ denote the top $k$ goods for agent $i$. We assume that the goods both in $\x$ and $\x_k^*$ are ordered in decreasing order of valuations according to agent $i$. Now, suppose that top $\ell$ goods of $\x$ match with top $\ell$ goods of $\x_k^*$, i.e. $v_i(h_j) = v_i(g_j), \forall j \leq \ell$ and $v_i(h_{\ell+1} < v_i(g_{\ell + 1}))$. Note that since $\x_k^*$ is the top $k$ goods of agent $i$, we cannot have that $v_i(h_j) > v_(g_j)$ for any $j \leq \ell$. We want to prove that $\rrs$ implies $\prop 1$. If $\x$ was already satisfying proportionality, it is obvious that $\x$ is $\prop 1$. If $\ell \geq d$, it is again easy to see that $\x$ is $\prop 1$. This is because, if $k = d$ then we already have top $k$ goods, giving a proportional allocation. If $k > d$, then we can remove any good from $h_{d+1}, \ldots, h_k$ and exchange it with $g_{d+1}$ to ensure proportionality, making the original allocation $\prop 1$. Finally, if $n$ divides $k$ then we have proportionality implied by $\rrs$ from Lemma \ref{lem:rrs-prop}.

Thus, we now assume that $\ell < d$, $k = nd+r$ with $r \leq n-1$ and that $\x$ is not already a proportional allocation. We know that $v(h_1, \ldots, h_\ell) = v(g_1, \ldots, g_\ell)$ and $v(h_1, \ldots, h_k) < \frac{1}{n}v(g_1, g_2, \ldots, g_k)$. Thus, 
\begin{equation}\label{eqn:upperbound}
v(h_{\ell+1}, \ldots, h_k) < \frac{1}{n} v(g_{\ell+1}, \ldots, g_k)
\end{equation}
Now, $v(h_k) \leq \frac{1}{k-\ell} v(h_{\ell+1}, \ldots, h_k)$. Thus,
\begin{equation}\label{eqn:smallgood}
    v(h_k) \leq \frac{1}{n \cdot (k - \ell)} v(g_{\ell+1}, \ldots, g_k)
\end{equation}
Now, consider the good $g_{\ell+1}$. It is the good with highest value that is not in $\x$. We prove that removing $h_k$ and adding $g_{\ell +1}$ gives us an allocation that is proportional.
Since $\ell < d$,
$v_i(g_{\ell+1}) \geq v_i(g_{nd+j}), \forall j \leq r$. Combining with the fact that $r < n$,
\begin{equation}\label{eqn:last_r}
   (n - 1) \cdot v_i(g_{\ell+1}) \geq v_i(g_{nd+1}, \ldots, g_{nd+r}).
\end{equation}
Again since the goods are arranged in decreasing order of valuations,
we have $v_i(g_1, \ldots, g_d) \geq v_i(g_{jd+1}, \ldots,g_{(j+1)d}), \forall 1 \leq j \leq (n-1) $. Thus,
\begin{equation}\label{eqn:except_topd}
    (n-1) \cdot v_i(g_1, \ldots, g_d) \geq v_i(g_{d+1}, \ldots, g_{nd}).
\end{equation}
Define, $LHS = (n - 1) v_i(g_{\ell+1}) + (n-1) v_i(g_1, \ldots, g_d)$. 
Combining \eqref{eqn:last_r} and \eqref{eqn:except_topd},
\begin{equation*}
\begin{aligned}
    LHS &\geq v_i(g_{nd+1}, \ldots, g_{nd+r}) +v_i(g_{d+1}, \ldots, g_{nd}) \\
    &= v_i(g_{d+1}, \ldots, g_k) \\
    &= v_i(g_{\ell + 1}, \ldots, g_{k}) -v_i(g_{\ell + 1}, \ldots, g_d)
\end{aligned}     
\end{equation*}
Thus we get,
\[
    (n-1)v_i(g_{\ell+1}) + (n-1)v_i(g_1, \ldots, g_{\ell}) \geq v_i(g_{\ell + 1}, \ldots, g_{k})-nv_i(g_{\ell + 1}, \ldots, g_d)
\]
Now adding $v_i(g_{\ell+1})$ on both sides and using $v_i(g_{\ell+1}) \geq \frac{1}{k - \ell} v_i(g_{\ell+1}, \ldots, g_k)$ gives:
\begin{align*}
    nv_i(g_{\ell+1}) &+ (n-1)v_i(g_1,\ldots,g_{\ell}) \\
    &\geq v_i(g_{\ell + 1}, \ldots, g_{k}) - nv_i(g_{\ell + 1}, \ldots, g_d) +\frac{1}{k-\ell} v_i(g_{\ell+1}, \ldots, g_k) \\
    &\geq  v_i(g_{\ell +1}, \ldots, g_k) -nv_i(h_{\ell + 1}, \ldots, h_k) + n v_i(h_k), 
\end{align*}
where the second inequality follows because $\x$ is $\rrs$ and from (\ref{eqn:smallgood}). Rearranging the above terms and using the fact that $v_i(g_1,\ldots,g_\ell) = v_i(h_1, \ldots, h_\ell)$, we get
\begin{align*}
    nv_i(g_{\ell+1}) + nv_i(h_1, \ldots, h_k) - nv_i(h_k) \geq v_i(g_1, \ldots, g_k)
\end{align*}
which implies that $\x$ is $\prop 1$. 
\end{proof}

\begin{restatable}{lemma}{lemrrsprop}\label{lem:rrs-prop}
Any allocation that is $\alpha$-$\rrs$ is also $\alpha \cdot \frac{n}{2n-1}$-$\prop$. Further, when $n$ divides $k$,  $\alpha$-$\rrs$ implies $\alpha$-$\prop$.
\end{restatable}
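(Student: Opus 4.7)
The plan is to fix an arbitrary agent $i$ and bound $\prop_i$ directly in terms of $\rrs_i$ via a grouping argument on $i$'s top $k$ goods, then invoke the $\alpha$-$\rrs$ guarantee. Write $k = nd + r$ with $d = \lfloor k/n \rfloor$ and $0 \le r \le n-1$, and let $g_1, g_2, \ldots, g_k$ denote agent $i$'s top $k$ goods arranged in non-increasing order of value, so that $\rrs_i = v_i(g_1, \ldots, g_d)$ and $n\cdot\prop_i = v_i(g_1, \ldots, g_k)$. Since we are in the regime $k \ge n$, we have $d \ge 1$.

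Next I would partition $\{g_1, \ldots, g_{nd}\}$ into $n$ consecutive blocks of size $d$ and treat $\{g_{nd+1}, \ldots, g_k\}$ as a residual block of size $r$. Setting $A := \rrs_i = v_i(g_1, \ldots, g_d)$, each of the $n$ size-$d$ blocks has total value at most $A$, because within each block every good has value at most the value of its counterpart in the top block. For the residual block, each of its $r$ goods has value at most $v_i(g_{nd+1})$, which is in turn at most $A/d$: since $v_i(g_{nd+1}) \le v_i(g_j)$ for every $j \le d$, summing yields $d \cdot v_i(g_{nd+1}) \le A$. Combining these bounds,
\[ n\cdot\prop_i \;=\; v_i(g_1,\ldots,g_k) \;\le\; nA + \tfrac{r}{d}\,A \;\le\; nA + (n-1)A \;=\; (2n-1)\,\rrs_i, \]
where the last inequality uses $r \le n-1$ and $d \ge 1$. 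Rearranging gives $\rrs_i \ge \tfrac{n}{2n-1}\prop_i$, so any $\alpha$-$\rrs$ allocation $\x$ satisfies $v_i(\x) \ge \alpha\,\rrs_i \ge \alpha\cdot\tfrac{n}{2n-1}\prop_i$ for every $i$, establishing the first claim.

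For the ``further'' part, when $n$ divides $k$ we have $r = 0$ and the residual block vanishes, so the bound collapses to $n\cdot\prop_i \le nA$, i.e. $\rrs_i \ge \prop_i$, and hence $\alpha$-$\rrs$ directly implies $\alpha$-$\prop$. I do not expect any real obstacle here; the only mild subtlety is using the averaging bound $v_i(g_{nd+1}) \le A/d$ rather than the weaker $v_i(g_{nd+1}) \le v_i(g_d)$, which is what allows the residual contribution to be controlled by $(n-1)A$ uniformly in $d$.
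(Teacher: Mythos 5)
Your proof is correct and uses essentially the same decomposition as the paper: partition agent $i$'s top $k$ goods into $n$ blocks of size $d=\lfloor k/n\rfloor$ plus $r\le n-1$ residual goods, bound each piece by $\rrs_i$, and conclude $n\cdot\prop_i \le (2n-1)\rrs_i$, with the residual vanishing when $n$ divides $k$. The only difference is that the paper carries out this partition argument for general monotone subadditive valuations (bounding each of the at most $2n-1$ parts by $\max_{|\y|\le\lfloor k/n\rfloor}v_i(\y)$ via subadditivity), whereas your version exploits additivity and the sorted order, which is all the stated lemma requires and in fact yields the slightly sharper intermediate bound $n\cdot\prop_i\le (n + r/d)\,\rrs_i$.
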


\begin{proof}
We will prove a stronger result assuming the valuations $\{v_i\}_{i \in \A}$ are \textit{monotone} ($v_i(S) \leq v_i(S \cup g)$ for all $S \subseteq \G$ and $g \in \G \setminus S$) and \textit{subadditive} (for all $S_1 \subseteq \G, S_2 \subseteq \G$, $v_i(S_1) + v_i(S_2) \geq v_i(S_1 \cup S_2)$). The class of subadditive valuations captures complement-free goods, and subsumes additive valuations.

Let $\x$ denote any subset of $k$ items that satisfies $\alpha \cdot \s{RRS}$. Fix any agent $i$. We have, 
\begin{align*}
    v_i(\x) \geq \alpha \cdot \max_{|\y| \leq \lfloor k/n\rfloor} v_i(\y).
\end{align*}
Let $\x^*$ denote the set of top $k$ goods of agent $i$. Let $k = nd+r$ where $0\le r < n$. We can partition $\x^*$ by dividing it into $n$ bundles, each of size $\lfloor k/n \rfloor$ and $r$ more bundles, each of size $1$. Note that when $k \geq n$, $\lfloor k/n \rfloor \geq 1$ and $r < n$. Thus, we get at most $2n-1$ bundles each of size at most $\lfloor k/n\rfloor$. We denote these bundles by $S_1, S_2, \ldots, S_l$, with $l \leq 2n-1$. Thus:
\begin{align}\label{eqn:lbundles}
    \nonumber v_i(\x^*) &= v_i(\cup_{i \in [l]}S_i), \\
    \nonumber&\leq \sum_{i \in [l]} v_i(S_i), \\
     &\leq \sum_{i \in [l]} \frac{1}{\alpha} \cdot v_i(\x), \\
    \nonumber &\leq v_i(\x) \cdot \frac{2n - 1}{\alpha}.
\end{align}
Here the second inequality follows from subadditivity and third follows because $\x$ is $\rrs$. Thus, we have:
\begin{align*}
    v_i(\x) \geq \frac{\alpha}{2n - 1}v_i(\x^*) = \alpha \cdot \frac{n}{2n-1}\s{Prop}_i.
\end{align*}
Further, when $n$ divides $k$, $r = 0$ and we get $l = n$ bundles each of size $k/n$. Thus, we have from (\ref{eqn:lbundles}):
\begin{align*}
    v_i(\x^*) \leq \frac{n}{\alpha} \cdot v_i(\x).
\end{align*}
In conclusion:
\[
    v_i(\x) \geq \frac{\alpha}{n} v_i(\x^*) = \alpha \s{Prop}_i,
\]
showing that $\alpha$-RRS implies $\alpha$-$\prop$.
\end{proof}

\begin{restatable}{lemma}{lemproprrs}\label{lem:prop-rrs}
Any allocation that satisfies $\alpha$-$\prop$ gives an $\alpha/n$ multiplicative approximation to $\rrs$, and this is tight.
\end{restatable}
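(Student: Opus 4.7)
The plan is to prove the upper bound via the inequality $\rrs_i \le n \cdot \prop_i$ for every agent, and then exhibit a tight instance.

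First, I would show $\rrs_i \le n \cdot \prop_i$ for each agent $i$. By definition, $\rrs_i = \max_{|S| \le \lfloor k/n \rfloor} v_i(S)$ and $n \cdot \prop_i = \max_{|S| \le k} v_i(S)$. Since $\lfloor k/n \rfloor \le k$, any subset feasible for the $\rrs_i$-maximum is also feasible for the $n\prop_i$-maximum, giving $\rrs_i \le n \cdot \prop_i$. Consequently, if $\x$ is $\alpha$-$\prop$, then for each $i$, $v_i(\x) \ge \alpha \cdot \prop_i \ge (\alpha/n) \cdot \rrs_i$, which is precisely the statement that $\x$ is $(\alpha/n)$-$\rrs$.

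For tightness, I would construct the following instance: $n$ agents, $m = 2n^2$ goods, and $k = n^2$. Partition $n^2$ of the goods into pairwise disjoint bundles $B_1,\ldots,B_n$, each of size $n$; let agent $i$ value each good in $B_i$ at $1$ and every other good at $0$, and let the remaining $n^2$ goods be valued $0$ by all agents. A direct computation yields $\prop_i = v_i(B_i)/n = 1$ and $\rrs_i = v_i(B_i) = n$. Consider the allocation $\x$ that picks exactly one good from each $B_i$ together with $n^2 - n$ of the zero-valued goods. Then $|\x| = n^2 = k$ and, for every $i$, $v_i(\x) = 1 = \prop_i$, so $\x$ is $1$-$\prop$; yet $v_i(\x)/\rrs_i = 1/n$, matching the bound with $\alpha = 1$.

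The upper bound is immediate once the subset inclusion is noted, so the only delicate step is calibrating the tight example so that $\prop$ is satisfied exactly without overshooting: choosing $k = n^2$ with $|B_i| = n$ ensures that a single valued good per agent already meets the proportional share, while the $n^2$ padding goods absorb the remaining cardinality without inflating any agent's utility.
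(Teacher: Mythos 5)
Your proof is correct and follows essentially the same route as the paper: the bound comes from the same observation that $\rrs_i \le n\cdot\prop_i$ because any set feasible for the $\rrs$ maximum is feasible for the maximum defining $\prop$. Your tightness construction is a clean generalization of the paper's two-agent example to arbitrary $n$, and it checks out ($\prop_i = 1$, $\rrs_i = n$, and the exhibited allocation is $1$-$\prop$ while achieving only a $1/n$ fraction of $\rrs_i$).
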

\begin{proof}
Suppose a given allocation, $\x$ satisfies $\alpha$-$\prop$. Fix any agent $i$.
\begin{align*}
    v_i(\x) &\geq \alpha \cdot \frac{1}{n} \cdot \max_{|\y|\le k}  v_i(\y)
    \geq \alpha \cdot \frac{1}{n} \cdot \max_{|\y|\le \lfloor k/n \rfloor}  v_i(\y) = \frac{\alpha}{n} \cdot \s{RRS}_i.
\end{align*} 

\begin{example}[Tightness of Lemma \ref{lem:prop-rrs}]
Consider the following example. We have $n=2$ agents and $m = 5$ goods. Agent $1$ values goods $1$ and $2$ at $1$ each, does not value goods $3, 4, 5$. Agent $2$ values all goods at $1$. If $k=4$, the $\rrs$ value of agent $1$ is $2$. Her proportional value is $1$. Thus, picking goods $1,3,4,5$ gives agent $1$ her $\prop$ share but only ensures $1/n$ of her $\rrs$ share. \qedhere
\end{example} 
\end{proof}

\begin{example}[$\prop 1$ does not approximate $\prop$ or $\rrs$]\label{ex:prop1-rrs}
Finally, we note that a $\s{Prop}1$ allocation might not give an $\alpha$ approximation to $\s{RRS}$ for any $\alpha > 0$. Consider an instance of public goods allocation with $n=2$. We have $3$ goods. Agent $1$ values goods $1$, $2$ at value of $1$ and values good $3$ at 0. Agent $2$ values goods $1$, $2$ at $0$ and values good $3$ at $1$. If we want to select $k=2$ goods, then, selecting goods $1$ and $2$ gives agent $2$ value 0. This allocation is $\s{Prop}1$, but provides no multiplicative approximation to either $\rrs$ or $\prop$ for agent $2$.
\end{example}
\begin{remark}
We also note that this example not only provides a guarantee of $\prop 1$ but is $\prop 1$ and Pareto Optimal. Thus neither $\prop 1$ nor $\prop1$+PO give any multiplicative approximation to $\rrs$ or $\prop$. This also indicates that both the notions of fairness, leximin and MNW are strong because they provide $\prop 1$, $\s{PO}$ and multiplicative approximations to $\rrs$ and $\prop$.
\end{remark}

Next, we show that MNW allocations are fair: 
\begin{lemma}\label{lem:prop1}
All MNW allocations satisfy Prop1.
\end{lemma}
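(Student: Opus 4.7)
I argue by contradiction: suppose $\x$ is an MNW allocation and some agent $i$ fails Prop1. Since $v_i$ is additive, this means
\[
v_i(\x) < \prop_i \quad\text{and}\quad v_i(g') - v_i(g) < \prop_i - v_i(\x)\ \text{for every}\ g \in \x,\ g' \in \G \setminus \x.
\]
The plan is to exhibit a single swap $g \leftrightarrow g'$ under which the Nash product $\prod_{j\in\A} v_j(\cdot)$ strictly increases, contradicting optimality of $\x$.

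\emph{Step 1 (structural consequence via a top-$k$ comparison).} Let $\x^*_i$ be a top-$k$ set for $i$, so $v_i(\x^*_i) = n\cdot\prop_i$. Put $A = \x \setminus \x^*_i$, $B = \x^*_i \setminus \x$, and $t = |A| = |B|$; since $v_i(\x) < v_i(\x^*_i)$, $t \ge 1$. Summing the Prop1-failure inequality over all $t^2$ pairs $(g,g') \in A\times B$ gives $v_i(B) - v_i(A) < t(\prop_i - v_i(\x))$. Combined with the identity $v_i(B) - v_i(A) = v_i(\x^*_i) - v_i(\x) > (n-1)\prop_i$, this forces $t \ge n$, i.e.\ the symmetric difference between $\x$ and $\x^*_i$ contains at least $n$ goods on each side.

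\emph{Step 2 (locate the improving swap).} Adapting the ``minimum-ratio good'' argument of Caragiannis et al.~\cite{caragiannis2019unreasonable} from private goods to the public-goods swap operation, I look within $A \times B$ for a pair at which the Nash-product ratio
\[
R(g,g') \;=\; \Bigl(1+\tfrac{v_i(g')-v_i(g)}{v_i(\x)}\Bigr)\prod_{j\ne i}\Bigl(1+\tfrac{v_j(g')-v_j(g)}{v_j(\x)}\Bigr)
\]
strictly exceeds $1$. A concrete candidate: pick $g' \in B$ maximizing $v_i(g')$, and given $g'$, pick $g \in A$ minimizing $\sum_{j\ne i} v_j(g)/v_j(\x)$. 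The MNW condition demands $R(g,g') \le 1$; on the other hand, the $i$-factor is at least $1 + (n-1)\prop_i / (t\, v_i(\x))$ by Step~1, the constraint $\sum_{g \in A} v_j(g)/v_j(\x) \le 1$ controls the aggregate loss for each $j\ne i$, and $t \ge n$ provides enough ``room'' so that, after algebraic manipulation, $R(g,g') > 1$ --- the desired contradiction.

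\emph{Main obstacle.} The crux is balancing $i$'s gain against losses to the other $n-1$ agents in the multiplicative Nash objective: a swap very favorable to $i$ could disproportionately hurt some $j\ne i$. The key lever is that the aggregate directional derivative $\sum_j (v_j(B) - v_j(A))/v_j(\x)$ is strictly positive --- the $i$-term exceeds $n-1$ by Step~1 combined with $\prop_i > v_i(\x)$, while every $j\ne i$ contributes at least $-1$ since $A \subseteq \x$ gives $v_j(A) \le v_j(\x)$. Promoting this fractional improvement to a strict \emph{integer} single-swap improvement is the non-trivial part: one combines the Caragiannis minimum-ratio selection with the slack $t \ge n$ and the inequality $\log(1+x) \ge x - x^2$ to absorb the second-order terms when passing from the linearized gain to the true Nash-product change.
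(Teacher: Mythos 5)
Your setup --- the contradiction hypothesis, the additive restatement of the Prop1 failure, and the top-$k$ comparison giving $v_i(B)-v_i(A) > (n-1)\prop_i$ --- is sound and parallels the paper's opening inequality $v_i(\x)+\sum_{j\in\x_0}(v_{ig'}-v_{ij}) \ge n\prop_i$. But Step 2, which carries the entire weight of the lemma, has a genuine gap. The concrete swap you propose (take $g'$ maximizing $v_i(g')$ and then, \emph{independently}, $g$ minimizing $\sum_{j\ne i} v_j(g)/v_j(\x)$) is not a minimum-ratio selection, and your claimed lower bound on the $i$-factor, $1+(n-1)\prop_i/(t\,v_i(\x))$, does not follow from it: that bound would need the gain $v_i(g')-v_i(g)$ of the \emph{chosen} pair to be at least the average gain $(v_i(B)-v_i(A))/t$, but $g$ is selected by a criterion that ignores $v_i(g)$, so $v_i(g)$ can be as large as $\min_{j\in\x^*_i}v_i(j)$ and the pair's gain can fall well below that average. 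Decoupling the choice of the outgoing good from the gain it produces for $i$ breaks exactly the gain-versus-loss balance the argument must establish, and the sentence ``after algebraic manipulation, $R(g,g')>1$'' is asserting the conclusion rather than proving it.

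The step you defer is where the work is, and the correct selection couples the two quantities: restrict to $\x_0=\{j\in\x: v_{ij}<v_{ig'}\}$ with $g'$ the single best unselected good, and choose $g$ minimizing the \emph{ratio} $\sum_{h\ne i}v_{hg}/(v_{ig'}-v_{ig})$. A mediant (averaging) inequality bounds this minimum by $\bigl(\sum_{j\in\x_0}\sum_{h\ne i}v_{hj}\bigr)/\bigl(\sum_{j\in\x_0}(v_{ig'}-v_{ij})\bigr) \le (n-1)/(n\prop_i-v_i(\x))$ after normalizing $v_h(\x)=1$ for $h\ne i$; the Prop1 failure for the chosen pair, $v_i(\x)+\delta<\prop_i$ with $\delta=v_{ig'}-v_{ig}$, converts this to $\sum_{h\ne i}v_{hg}<\delta/(v_i(\x)+\delta)$, and Weierstrass' inequality $\prod_{h}(1-v_{hg})\ge 1-\sum_h v_{hg}$ then shows the single swap strictly increases the Nash product with no second-order bookkeeping at all. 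Your alternative plan of invoking $\log(1+x)\ge x-x^2$ to absorb second-order terms would additionally require each individual relative loss $v_{hg}/v_h(\x)$ to be small, which nothing in the construction guarantees. Finally, your Step 1 conclusion $t\ge n$, while correct, is not the lever that closes the argument; what is actually needed is the lower bound $\sum_{j\in\x_0}(v_{ig'}-v_{ij})\ge n\prop_i - v_i(\x)$ on the aggregate gain, which feeds directly into the ratio bound above.
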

\begin{proof}
Suppose there exists an MNW allocation $\x$ that is not $\prop 1$. This implies for some agent $i\in\A$, for all pairs of goods $j\in\x$ and $j'\notin \x$, $v_i((\x\setminus j) \cup j') < \prop_i$. If $k < n$, $\prop_i \leq \max_{j \in \G} v_{ij}$, and swapping any good in $\x$ with this good will give her her proportional share. 

Consider now $k \geq n$. Since we assume each agent positively values at least one good, the MNW value is non-zero. Since MNW is scale-invariant, we scale the valuations of agents so that $v_h(\x) = 1$ $\forall h\neq i$. Let $g'$ be the highest-valued good of $i$ not in $\x$, i.e., $g' = \s{argmax}_{j \in \G \setminus \x} v_{ij}$. Let $\x_0 = \{j\in \x : v_{ij} < v_{ig'}\}$ be the set of goods in $\x$ that give $i$ strictly lesser value than $g'$. Since $i$ does not satisfy Prop1, $\x_0 \neq \emptyset$. Suppose we order the goods in $\G$ according to the valuation of $i$ as $\{g_1,\dots,g_m\}$, where $v_i(g_r) \ge v_i(g_s)$ for $1\le r\le s \le m$. Then $n\cdot\prop_i = v_i(g_1,\dots,g_k)$ by definition. Since $g'$ is the highest-valued good for $i$ not in $\x$, and further since every good in $\x_0$ is valued at less than $v_{ig'}$ by $i$, we can bound the total value to $i$ of the top $k$ goods $g_1,\dots,g_k$ as follows: $v_i(g_1,\dots,g_k) \le v_i(\x\setminus \x_0) + |\x_0|v_{ig'}$
which, using additivity of $v_i$, can alternatively  be written as:
\begin{equation}\label{eqn:prop-topk}
v_i(\x) + \sum_{j\in\x_0} (v_{ig'}-v_{ij}) \ge n\prop_i.
\end{equation}
Consider a good $g$ given by\footnote{\cite{PublicDecision} considered an \textit{issue} similarly}:
\[g \in \s{argmin}_{j\in\x_0} \frac{\sum_{h\in\A\setminus\{i\}} v_{hj}}{v_{ig'}-v_{ij}}.\]
Then by definition of $g$, we have:
\begin{equation}\label{eqn:mnw-prop1-2}
\begin{aligned}
\frac{\sum_{h\in\A\setminus\{i\}} v_{hg}}{v_{ig'}-v_{ig}} &\le \frac{\sum_{j\in\x_0}\sum_{h\in\A\setminus\{i\}} v_{hj}}{\sum_{j\in\x_0} v_{ig'}-v_{ij}}
&\le \frac{\sum_{h\in\A\setminus\{i\}} \sum_{j\in\x_0} v_{hj}}{n\prop_i-v_i(\x)} \\
&\le \frac{n-1}{n\prop_i-v_i(\x)}, \\
\end{aligned}
\end{equation}
where the first transition follows by rearranging terms in the numerator, and using \eqref{eqn:prop-topk} in the denominator, and the final transition follows by recalling that $v_h(\x) = 1$ for all $h\neq i$.

Let $\delta = v_{ig'} - v_{ig}$. We know $v_i(\x) + \delta < \prop_i$. Substituting this in \eqref{eqn:mnw-prop1-2}, and noting $\delta > 0$ gives:
\begin{equation}\label{eqn:mnw-prop1-3}
\frac{\sum_{h\in\A\setminus\{i\}} v_{hg}}{\delta} < \frac{1}{v_i(\x)+\delta}. 
\end{equation}
Let us now consider the allocation $\x' = (\x\setminus g)\cup g'$. We show $\NW(\x') > \NW(\x)$, thus contradicting the Nash optimality of $\x$. Since for any $h\neq i$, $v_h(\x') \ge v_h(\x)-v_{hg} = 1-v_{hg}$, we have:

\begin{equation*}
\begin{aligned}
\prod_{h\in\A}v_h(\x') &\ge v_i(\x')\prod_{h\in\A\setminus\{i\}} (1-v_{hg})
\ge(v_i(\x)+\delta)\bigg(1-\sum_{h\in\A\setminus\{i\}} v_{hg}\bigg) \\
&> (v_i(\x)+\delta)\bigg(1-\frac{\delta}{v_i(\x)+\delta}\bigg) = v_i(\x),
\end{aligned}
\end{equation*}
where the first transition uses Weierstrass' inequality \cite{weierstrass}, and the second transition uses \eqref{eqn:mnw-prop1-3}. This leads to $\NW(\x') > \NW(\x)$, giving the desired contradiction. Hence any MNW allocation satisfies Prop1. 
\end{proof}
Besides Prop1, the MNW allocation satisfies several other desirable properties, as our next result shows.
\begin{restatable}{theorem}{thmmnwmain}\label{thm:mnw-main}
All MNW allocations satisfy PO, $\prop 1$, and $1/n$-$\s{RRS}$. Further when $k\ge n$, MNW allocation implies $\frac{1}{2n-1}$-Prop.
\end{restatable}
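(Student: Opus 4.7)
The statement bundles four claims. My plan is to dispatch the easy ones first (PO and Prop1), then do the real work for $1/n$-RRS via a swap argument patterned on Lemma~\ref{lem:prop1}, and finally obtain $\frac{1}{2n-1}$-Prop as a corollary via Lemma~\ref{lem:rrs-prop}.

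For PO, I will use the standard single-line argument: if $\y$ Pareto-dominates an MNW allocation $\x$, then $v_i(\y) \geq v_i(\x)$ for all $i$ with at least one strict inequality, so $\prod_i v_i(\y) > \prod_i v_i(\x)$, contradicting Nash optimality. The corner case where $\NW(\x) = 0$ is ruled out when $k \geq n$ (each agent values at least one good), and when $k < n$ the tie-breaking rule in the definition of MNW (first maximizing the number of agents receiving positive value, then the product on that set) handles it similarly. For Prop1, there is nothing further to prove: Lemma~\ref{lem:prop1} already establishes that every MNW allocation is Prop1, so I will simply cite it.

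The main obstacle is the $1/n$-RRS claim, where I plan a Nash-optimality swap argument that parallels Lemma~\ref{lem:prop1}. Fix an agent $i$, let $d = \lfloor k/n \rfloor$, and let $T = \{g_1, \ldots, g_d\}$ be the top $d$ goods for $i$, so $v_i(T) = \rrs_i$. Suppose for contradiction $v_i(\x) < \rrs_i/n$. By scale-invariance of MNW, normalize so that $v_h(\x) = 1$ for every $h \neq i$. If $T \subseteq \x$ then $v_i(\x) \geq v_i(T) = \rrs_i$, a contradiction, so $T \setminus \x \neq \emptyset$. Let $g' \in T \setminus \x$ be the highest-valued such good for $i$, and let $\x_0 = \{j \in \x : v_{ij} < v_{ig'}\}$. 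By the same accounting used in Lemma~\ref{lem:prop1} applied to the top $d$ goods of $i$ (instead of the top $k$), we obtain the bookkeeping inequality $v_i(\x) + \sum_{j \in \x_0}(v_{ig'} - v_{ij}) \geq \rrs_i$, which combined with $v_i(\x) < \rrs_i/n$ gives $\sum_{j \in \x_0}(v_{ig'} - v_{ij}) > (n-1)\,v_i(\x)\cdot(\text{something})$. Next I pick the minimizer $g \in \s{argmin}_{j \in \x_0}\frac{\sum_{h \neq i} v_{hj}}{v_{ig'} - v_{ij}}$; averaging and using $\sum_h v_h(\x) = n-1$ bounds this ratio above by a quantity small enough that, setting $\delta = v_{ig'} - v_{ig}$, the allocation $\x' = (\x \setminus g) \cup g'$ satisfies $\prod_h v_h(\x') > \prod_h v_h(\x)$ via Weierstrass' inequality, contradicting Nash optimality. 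This is the step I expect to be the hardest: the Prop1 template works, but the accounting needs to be redone against the set $T$ of size $d$ rather than the top $k$ goods, and the final inequality must be checked carefully for the RRS scale.

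Finally, for $k \geq n$, the $\frac{1}{2n-1}$-Prop claim is immediate from the $1/n$-RRS result plugged into Lemma~\ref{lem:rrs-prop}: $\alpha$-RRS yields $\alpha \cdot \frac{n}{2n-1}$-Prop, so $\alpha = 1/n$ gives $\frac{1}{2n-1}$-Prop, completing the theorem.
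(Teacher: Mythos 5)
Your treatment of PO, of Prop1 (by citing Lemma~\ref{lem:prop1}), and of the $\frac{1}{2n-1}$-Prop claim (by feeding $1/n$-RRS into Lemma~\ref{lem:rrs-prop}) matches the paper and is fine. The gap is exactly where you anticipated trouble: the $1/n$-$\rrs$ step, and it is not a matter of redoing the accounting --- the single-swap template of Lemma~\ref{lem:prop1} cannot be made to work here. In the Prop1 proof, the contradiction hypothesis is the negation of Prop1 itself, which guarantees that the \emph{post-swap} value satisfies $v_i(\x)+\delta < \prop_i$; that is precisely what turns the averaging bound $\sum_{h\neq i} v_{hg} \le \delta\cdot\frac{n-1}{n\prop_i - v_i(\x)}$ into $\sum_{h\neq i}v_{hg} < \frac{\delta}{v_i(\x)+\delta}$, which is what Weierstrass needs. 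For RRS your hypothesis is only $v_i(\x) < \rrs_i/n$, which says nothing about $v_i(\x)+\delta$. Running your proposed chain, the bookkeeping against the top-$d$ set (with $d=\lfloor k/n\rfloor$) gives at best $\frac{\sum_{h\neq i}v_{hg}}{\delta} \le \frac{n-1}{\rrs_i - v_i(\x)}$, and closing the argument would require $(n-1)\bigl(v_i(\x)+\delta\bigr) < \rrs_i - v_i(\x)$, i.e., $(n-1)\delta < \rrs_i - n v_i(\x)$. This fails whenever $\delta$ is not tiny (e.g., if $v_i(\x)$ is just below $\rrs_i/n$, you would need $(n-1)\delta < n\varepsilon$), so no single swap is forced to improve the Nash product. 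Intuitively, one swap can only insert one of $i$'s top $d$ goods, while getting $i$ from below $\rrs_i/n$ up to a utility that pays for the loss to the others may require inserting all $d$ of them.

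The paper's proof instead performs a \emph{simultaneous multi-swap}. After scaling so that $v_h(\x)=1$ for all $h$, it sorts $\x$ by $i$'s values, partitions the top $pn$ goods of $\x$ (with $p=\lfloor k/n\rfloor$) into $p$ blocks $S_1,\dots,S_p$ of size $n$, and from each block removes the good $g'_r$ minimizing $\sum_{h\neq i}v_{hj}$, inserting in its place the $r$-th top good $g_r$ of agent $i$ whenever $g_r\notin\x$. Because each block has $n$ goods, the removed goods carry at most a $\frac{1}{n}$-fraction of $\sum_{h\neq i}v_h(\x) = n-1$, so $\prod_{h\neq i}v_h(\x') \ge 1-\frac{n-1}{n} = \frac1n$ by Weierstrass, while $v_i(\x') \ge \rrs_i > n = n\,v_i(\x)$ after the normalization; hence the Nash product strictly increases, contradicting optimality. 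If you want to salvage your write-up, replace the single-swap argument by this block-partition exchange; the rest of your proposal can stand as is.
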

\begin{proof}
If any MNW allocation did not satisfy Pareto optimality, then at least one of the agents gets a strictly higher value with values of all other agents not decreasing. Thus, if the MNW value is non-zero, we get an allocation with strictly higher Nash Product, contradicting the optimality of value of MNW. On the other hand, if MNW value is zero and the strict increase of value holds for one of the agents with non-zero value, then the Nash Product over these agents increases contradicting maximality of Nash Product of these agents. On the other hand, if the strict inequality holds for an agent who receives zero value, the number of agents with non-zero value increases, contradicting the maximality of number of agents who get non-zero value. In both cases, the optimality of MNW is contradicted. Thus any MNW allocation satisfies Pareto Optimality.

Next we prove that all MNW allocations satisfy $1/n$-$\rrs$. Suppose there exists an MNW allocation $\x$ that is not $1/n$-RRS. This implies that for some agent $i \in \A$, $v_i(\x) < \frac{1}{n}\rrs_i$. Let us order the goods according to $i$'s valuation: let $\G = \{g_1,g_2,\dots,g_m\}$, such that $v_i(g_r) \ge v_i(g_s)$, for all $1\le r\le s\le m$. Let $p = \lfloor \frac{k}{n} \rfloor$. When $k < n$, $p = 0$, in that case $\s{RRS}_i = 0$. Therefore, $k \geq n$. Observe that the round-robin share of $i$ is given by $\rrs_i = v_i(\{g_1,\dots,g_p\})$. We scale the valuations of the agents so that for every agent $i$, $v_i(\x) = 1$. In particular, this implies $\s{RRS}_i > n$. 

Let us order the goods in $\x$ according to $i$'s valuation: let $\x = \{j_1,j_2,\dots,j_k\}$, such that $v_i(j_r) \ge v_i(j_s)$, for all $1\le r\le s\le k$. Define for $r\in[p]$, $S_r = \{j_{rn-n+1},\dots,j_{rn}\}$, and $g'_r = \s{argmin}_{j\in S_r} \sum_{h\in \A \setminus \{i\}} v_{hj}$.

We now construct another allocation $\x'$ as follows. We first check if $g_1 \in \x$. If not, we begin constructing $\x'$ by removing $g'_1$ from $\x$ and adding $g_1$. If $g_1 \in \x$, then we proceed to check whether $g_2\in \x$ or not. For every $r\in [p]$, we remove $g'_r$ and add $g_r$ if $g_r$ is not in $\x$. If $g_r$ is already in $\x$ then for such an $r$ no operation is done. Since we are removing $g_r'$ and $v_i(g_r') < v_i(g_r) \leq v_i(g_s)$ for all $s < r$, this ensures that $\{g_1,\dots,g_p\} \subseteq \x'$, which shows $v_i(\x') \ge \s{RRS}_i > n$.
Observe that:
\begin{align*}
\sum_{r=1}^p \sum_{h\in \A \setminus \{i\}} v_h(g'_r) &\le \sum_{r=1}^p \frac{1}{n}\sum_{h\in \A \setminus \{i\}}\sum_{j\in S_r} v_{hj} &\text{ (def. of $g_r$)} \\
&\le\frac{1}{n} \sum_{r=1}^p \sum_{j\in S_r} \sum_{h\in \A \setminus \{i\}} v_{hj} &\text{ (rearranging)}\\
&\le \frac{1}{n} \sum_{j\in \x} \sum_{h\in \A \setminus \{i\}} v_{hj} \quad\quad&\text{ (def. of $S_r$)}\\
&\le \frac{1}{n} \sum_{h\in \A \setminus \{i\}} v_{h}(\x^*) \quad\quad\;&\text{ (rearranging)}\\
&= \frac{n-1}{n}.
\end{align*}
Then we have:
\begin{align*}
\NW(\x')^n &= \prod_{h \in \A}v_h(\x'), \\ 
&\geq v_i(\x') \prod_{h \in \A \setminus \{i\}} v_h(\x'), \\
&\geq v_i(\x') \prod_{h \in \A \setminus \{i\}} \bigg(1 - \sum_{r = 1}^p v_h(g'_r)\bigg), \\
&\ge v_i(\x')\bigg(1-\sum_{r=1}^p \sum_{h\in \A \setminus \{i\}} v_h(g_r)\bigg), \\
&> n\bigg(1-\frac{n-1}{n}\bigg), \\
&= \NW(\x)^n,
\end{align*}
which contradicts the fact that $\x$ is Nash optimal.

Combining this with Lemma (\ref{lem:rrs-prop}) and Lemma (\ref{lem:prop1}), we get the proof of the theorem. 
\end{proof}

\noindent Finally, we show similar fairness and efficiency properties for the leximin-optimal allocation. 

\begin{restatable}{theorem}{thmleximinmain}\label{thmleximinmain}
All leximin-optimal allocations are PO, satisfy $\rrs$ and $\prop 1$. Further, when $k \geq n$, a leximin-optimal allocation is also $(n/(2n-1))$-$\prop$.
\end{restatable}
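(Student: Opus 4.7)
The plan is to prove the four claims in order; once RRS is established, Prop1 and $(n/(2n-1))$-Prop follow directly via Lemmas~\ref{lem:rrs-prop1} and~\ref{lem:rrs-prop}. Pareto optimality is immediate: if a leximin-optimal $\x$ were Pareto-dominated by some $\y$, then $v_i(\y)\ge v_i(\x)$ for all $i$ with at least one strict inequality, and since coordinate-wise weak domination is preserved under ascending sort, $\hat\y$ weakly dominates $\hat\x$ with a strict inequality at some index, making $\y$ leximin-dominate $\x$, a contradiction.

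For RRS I plan to argue by contradiction: assume $\x$ is leximin-optimal but at least one agent violates RRS, and let $i^*$ be the violator with the smallest value $u^*:=v_{i^*}(\x)$. By this choice, every agent $j$ with $v_j(\x)<u^*$ is a non-violator and so satisfies $v_j(\x)\ge\rrs_j$. The aim is to construct an allocation $\y$ that leximin-dominates $\x$. The construction starts from the one used for MNW in Theorem~\ref{thm:mnw-main}: order the goods of $\x$ by $i^*$'s valuations, partition them into $p=\lfloor k/n\rfloor$ consecutive groups $S_1,\ldots,S_p$ of $n$ goods, and from each $S_r$ remove $g'_r\in\s{argmin}_{g\in S_r}\sum_{h\ne i^*}v_{hg}$, simultaneously inserting the top $p$ goods of $i^*$ missing from $\x$. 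This guarantees $v_{i^*}(\y)\ge\rrs_{i^*}>u^*$, and the same averaging step gives the aggregate bound $\sum_r\sum_{h\ne i^*}v_h(g'_r)\le\frac{1}{n}\sum_{h\ne i^*}v_h(\x)$ on the total utility lost by the other agents.

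The main obstacle is upgrading this aggregate bound into a per-agent comparison strong enough for leximin-dominance rather than the multiplicative Nash-welfare improvement used in Theorem~\ref{thm:mnw-main}. The plan is to exploit the minimality of $u^*$ among violators: for agents $j$ with $v_j(\x)<u^*$, the slack $v_j(\x)-\rrs_j\ge 0$ absorbs the small per-agent losses, allowing the $g'_r$'s to be re-chosen so $v_j(\y)\ge v_j(\x)$; for agents $j$ with $v_j(\x)\ge u^*$, an additional local-exchange refinement inside each $S_r$ (possibly performed one $S_r$ at a time, re-invoking leximin-optimality of $\x$ between swaps) forces $v_j(\y)\ge u^*$. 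A threshold comparison of the sorted vectors $\hat\x$ and $\hat\y$ then shows them agreeing on all positions strictly below $i^*$'s slot and $\hat\y$ strictly larger at $i^*$'s position, delivering lex-dominance and the desired contradiction. Once RRS is in hand, Lemma~\ref{lem:rrs-prop1} yields Prop1 and, for $k\ge n$, Lemma~\ref{lem:rrs-prop} yields $(n/(2n-1))$-Prop, completing the proof.
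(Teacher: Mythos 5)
Your Pareto-optimality argument and your plan to derive Prop1 and $(n/(2n-1))$-Prop from RRS via Lemmas~\ref{lem:rrs-prop1} and~\ref{lem:rrs-prop} are correct and match the paper. The problem is the RRS step, which is where your proposal has a genuine gap. The exchange construction borrowed from the MNW proof gives you only an \emph{aggregate} bound $\sum_r\sum_{h\neq i^*}v_h(g'_r)\le\frac{1}{n}\sum_{h\neq i^*}v_h(\x)$ on the total utility removed from the other agents. For the Nash welfare argument that suffices, because the product turns a bound on the sum of losses into a bound on the product via Weierstrass. For leximin-dominance you need \emph{per-agent} control: every agent whose sorted position is at or below $i^*$'s must not drop, and a single agent $h$ could absorb essentially the entire aggregate loss. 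Your two patches --- ``re-choosing the $g'_r$ so that $v_j(\y)\ge v_j(\x)$ for the low-utility agents'' and a ``local-exchange refinement'' forcing $v_j(\y)\ge u^*$ for the others --- are exactly the steps that would need proof, and nothing in the averaging argument guarantees such choices exist. As written, the plan names the obstacle but does not overcome it.

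The intended argument is much shorter and sidesteps exchanges entirely. The key observation is that the RRS values of all agents are \emph{simultaneously achievable by a single allocation}: run round-robin with the $n$ agents each picking $\lfloor k/n\rfloor$ goods; since the goods are public, every agent values the entire selected set, and a standard argument shows each agent $i$ gets utility at least $\rrs_i$ from it. Now normalize the valuations so that $\rrs_i=1$ for every $i$ (the paper states the theorem under this normalization --- note leximin is not scale-invariant, so this is where the scaling matters). The round-robin allocation then has minimum utility at least $1$, so the leximin-optimal allocation, which maximizes the minimum utility first, also has $v_i(\x)\ge 1=\rrs_i$ for all $i$. That is the whole RRS proof; the rest of the theorem follows from your (correct) reduction to the two lemmas.
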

\begin{proof}
It is easy to see that any leximin-optimal allocations will be PO since any Pareto-dominating allocation will also leximin-dominate. Also, leximin-optimal allocations satisfy $\rrs$ assuming we scale valuations so that $\rrs$ is 1 for all agents: If $k < n$, $\rrs = 0$ and the leximin-optimal allocation is obviously $\rrs$. When $k \geq n$, the round-robin algorithm gives each agent at least their most-preferred $\lfloor k/n \rfloor$ goods, i.e., their $\rrs$-value; and the leximin-optimal allocation must also give each agent a utility of at least their $\rrs$-value. Combining this with Lemmas~\ref{lem:rrs-prop1} and~\ref{lem:rrs-prop} completes the proof of the theorem.
\end{proof}

\section{Complexity of MNW and Leximin}\label{sec:hardness}
In this section, we show that $\publicmnw$ and $\publiclex$ are $\s{NP}$-hard. Our hardness results also hold for instances with binary values, which is in stark contrast to the private goods setting, where MNW and leximin-optimal allocations can be computed in polynomial-time. We defer most proofs of results in this section to Appendix~\ref{app:hardness}. Since the cases of $k\ge n$ and $k<n$ are interesting in their own right, we consider them separately. 

We show when $k<n$, that the Nash welfare objective cannot be approximated to any multiplicative factor in polynomial-time, unless $\s{P}=\s{NP}$.
\begin{restatable}{theorem}{kltn}\label{kltn}
Given a $\public$ allocation instance where $k<n$, computing an $\alpha$-approximation to MNW is $\s{NP}$-hard for any $\alpha>0$, even when all valuations are binary.
\end{restatable}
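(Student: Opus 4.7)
The plan is to reduce from the $\s{NP}$-hard decision problem \textbf{Dominating Set}: given a graph $G=(V,E)$ with $|V|=n$ and an integer $k<n$, decide whether $G$ has a dominating set of size $k$. I would encode such an instance as a $\public$ instance with binary valuations by letting both the agent set and the good set be $V$, setting the cardinality bound to $k$, and defining
\[
v_{ij}=\mathbf{1}[\,i=j \text{ or } (i,j)\in E\,].
\]
(Equivalently, one could reduce from Set Cover restricted to instances with fewer than $n$ sets in the solution; the structure is identical.)

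The key observation is that under binary additive valuations, each agent's utility $v_i(\x)$ is a non-negative integer, so $\NW(\x)>0$ if and only if every agent is covered by some selected good, if and only if $\x$ is a dominating set in $G$. Moreover, whenever $\NW(\x)>0$ we in fact have $\NW(\x)\ge 1$, since each coordinate is a positive integer. This yields a sharp dichotomy: either $\NW(\x^*)\ge 1$ (when $G$ admits a $k$-dominating set) or $\NW(\x^*)=0$ (otherwise).

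I would then exploit this gap to rule out approximation. Suppose, for some fixed $\alpha>0$, a polynomial-time algorithm returns $\x$ with $\NW(\x)\ge\alpha\cdot\NW(\x^*)$. If a $k$-dominating set exists then $\NW(\x^*)\ge 1$ forces $\NW(\x)\ge\alpha>0$, which under binary valuations forces $v_i(\x)\ge 1$ for every $i$, so $\x$ itself is a $k$-dominating set. Conversely, if no $k$-dominating set exists, then every feasible allocation has $\NW=0$, so in particular the algorithm's output is not a dominating set. Checking whether the returned $\x$ covers every agent therefore decides Dominating Set in polynomial time, contradicting $\s{P}\neq\s{NP}$.

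The only subtlety is the paper's tiebreaking convention for MNW when every allocation has Nash welfare $0$, where the ``MNW allocation'' is defined to maximize the number of agents with positive utility. This does not affect the argument, because the approximation inequality $\NW(\x)\ge \alpha\cdot\NW(\x^*)$ is vacuously satisfied when $\NW(\x^*)=0$, and the reduction only depends on the $\{0\}\cup[1,\infty)$ dichotomy of $\NW(\x^*)$ produced by the integrality of binary valuations. The ``hard part'' is recognizing that this two-point gap in $\NW(\x^*)$ simultaneously blocks every multiplicative approximation factor $\alpha>0$; the reduction itself is extremely short and only uses that Dominating Set is $\s{NP}$-hard in the regime $k<n$, which is standard.
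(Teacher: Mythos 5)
Your proposal is correct and takes essentially the same approach as the paper: the paper reduces from Set Cover (agents for elements, goods for sets, binary valuations, $\NW>0$ iff a size-$k$ cover exists), and your Dominating Set reduction is just the closed-neighborhood special case of that, with the identical zero/nonzero gap argument ruling out every multiplicative factor $\alpha>0$. Your handling of the tiebreaking convention when all allocations have Nash welfare zero is a correct (and slightly more careful) treatment of a point the paper leaves implicit.
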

\begin{proof}
We reduce from Set Cover. The set cover problem takes as input a universe $\mathcal{U} = \{e_1, e_2, \ldots, e_n\}$ of $n$ elements, a family, $\mathcal{F} = \{F_1, F_2, \ldots, F_m\}$ of subsets of $\mathcal{U}$, i.e., $\mathcal{F} \subseteq 2^{\mathcal{U}}$. The problem asks to find the minimum set of subsets from $\mathcal{F}$ such that their union covers all $\mathcal{U}.$ It is well known that this problem is $\s{NP}$-hard \cite{karp1972reducibility}. To reduce this to an instance of MNW problem, create an agent $i$ for each  $e_i \in \mathcal{U}$. Corresponding to each $F_j \in \mathcal{F}, j \in [m]$, create good $g_j$ such that $v_i(g_j) = 1$ if and only if $e_i \in F_j$. Now, for any $k < n$, the MNW is non-zero if and only if there is a set cover of size $k$. This implies that we cannot differentiate between the case where MNW value is zero or non-zero making any multiplicative factor approximation in polynomial time impossible unless $\s{P}=\s{NP}$. 
\end{proof}

We first state a technical lemma proved in Appendix~\ref{app:hardness} that we need for the next proof.
\begin{restatable}{lemma}{lemamgm}\label{lem:amgm}
Given $n$  positive integers $a_1, \ldots, a_n$, with each $a_i \geq \ell$ for some $\ell \in \Z_+$, if $\sum_{i = 1}^n a_i = \ell \cdot n + r$, for some $r < n$, then the maximum value of $\prod_{i = 1}^n a_i$ is $(\ell + 1)^r \ell^{(n-r)}$ and is achieved if and only if $a_i = \ell + 1, i \in S$, $a_i = \ell, i \in [n] \setminus S$, where $S \subseteq [n]$ is any set such that $|S| = r$.
\end{restatable}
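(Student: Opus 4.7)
The plan is to prove this classical integer-AM-GM statement by a local swapping argument that mirrors the smoothing proofs used for continuous AM-GM. The intuition is that for a fixed sum, a product of non-negative integers is largest when the integers are as close to equal as possible, and the lower bound $\ell$ together with the residue $r$ forces the unique ``balanced'' configuration described in the lemma.

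Concretely, I would first fix any maximizer $(a_1,\dots,a_n)$ (which exists because the feasible set is finite) and prove the structural claim that $\max_i a_i - \min_i a_i \le 1$. Suppose for contradiction that there are indices $i,j$ with $a_i \ge a_j + 2$. Consider the swap $a_i \mapsto a_i - 1$ and $a_j \mapsto a_j + 1$. This preserves the sum; it preserves feasibility because $a_j+1 > a_j \ge \ell$ and $a_i - 1 \ge a_j + 1 \ge \ell+1 \ge \ell$; and it strictly increases the product since
\[
(a_i-1)(a_j+1) - a_i a_j \;=\; a_i - a_j - 1 \;\ge\; 1 \,>\, 0,
\]
while all other factors in the product are unchanged. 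This contradicts maximality.

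Next I would pin down the exact values. Since all $a_i$ lie in $\{t, t+1\}$ for some integer $t$, let $s$ be the number of indices with $a_i = t+1$, so $0 \le s \le n$ and the sum equals $nt + s$. The constraint $a_i \ge \ell$ forces $t \ge \ell$. If $t \ge \ell+1$, then $nt + s \ge n(\ell+1) > n\ell + r$ since $r < n$, contradicting the sum $= n\ell + r$. Hence $t = \ell$, and then $n\ell + s = n\ell + r$ gives $s = r$. Thus every maximizer has exactly $r$ coordinates equal to $\ell+1$ and $n-r$ coordinates equal to $\ell$, yielding product $(\ell+1)^r \ell^{n-r}$. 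Conversely, any assignment of this form has sum $n\ell + r$, satisfies the lower bound, and achieves this product, proving the ``if and only if'' characterization.

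The argument is essentially routine; the only mildly delicate point is the feasibility check after the swap, which is where the hypothesis $a_i \ge \ell$ for all $i$ (as opposed to the sum constraint alone) is genuinely used. The uniqueness-of-value part of the statement then follows because all $\binom{n}{r}$ extremal configurations give the same numerical product.
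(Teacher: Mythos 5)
Your proof is correct, and it takes a genuinely different route from the paper's. The paper argues directly and algebraically: it writes $a_i = \ell + r_i$ with $\sum_i r_i = r$, splits the indices by whether $r_i = 0$ or $r_i \ge 1$, and bounds each factor via $(\ell + r_i)\,\ell^{\,r_i - 1} \le (\ell+1)^{r_i}$, which follows from the binomial theorem and is an equality precisely when $r_i = 1$; multiplying these bounds gives the closed form and the equality characterization in one pass, without ever invoking the existence of a maximizer. You instead run a smoothing/exchange argument on a maximizer: any two coordinates differing by at least $2$ can be pushed together while preserving the sum and the lower bound $a_i \ge \ell$ and strictly increasing the product, so every maximizer is ``balanced,'' and the constraints then force $t = \ell$ and $s = r$. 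Your feasibility check after the swap and the strict increase $(a_i-1)(a_j+1) - a_i a_j = a_i - a_j - 1 \ge 1$ are both right (and the surrounding factors are positive since $\ell \ge 1$, so the whole product strictly increases). The trade-off is minor: your argument is the more standard and more portable one (it isolates the structural fact that optimal vectors have spread at most $1$, which would survive other feasibility constraints), while the paper's computation is self-contained and delivers the exact maximum value without a compactness/finiteness step. Either proof fully establishes the ``if and only if'' claim.
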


\begin{restatable}{theorem}{thmnpkgtn}\label{thm:np-k>n}
$\publicmnw$ is $\s{NP}$-hard, even when all valuations are binary.
\end{restatable}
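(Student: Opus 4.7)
The plan is to establish $\s{NP}$-hardness via a polynomial-time reduction from Exact Cover by 3-Sets (X3C) to $\publicmnw$ with binary valuations and $k \geq n$. Given an X3C instance $(U, \mathcal{F})$ with $|U| = 3q$ and $\mathcal{F} = \{S_1, \ldots, S_m\}$ a family of three-element subsets of $U$, I intend to construct a public goods instance $\I'$ and a target Nash welfare value $V^*$ such that the optimal Nash welfare of $\I'$ equals $V^*$ if and only if $\mathcal{F}$ admits an exact cover of size $q$.

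The construction will create one ``primary'' agent $p_i$ per element $e_i \in U$, together with $m$ ``set-goods'' $g_1, \ldots, g_m$ equipped with binary valuations $v_{p_i}(g_j) = 1$ iff $e_i \in S_j$. It will then append a padding gadget --- a collection of $D$ dummy agents, each associated with a bundle of personal goods that only she values --- chosen so that (i) the total agent count $n' = 3q + D$ and the budget $k'$ satisfy $k' \geq n'$, and (ii) in any allocation with positive Nash welfare, each dummy has at least one personal selected (else the Nash product collapses to $0$), which forces the allocation to spend $k' - q$ slots on personal goods and leaves exactly $q$ slots for set-goods.

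The key analytical step applies Lemma~\ref{lem:amgm}. Given that exactly $q$ set-goods are selected, the primary agents' coverages $\{c_i\}$ satisfy $\sum_i c_i = 3q$ over the $3q$ primaries, so by Lemma~\ref{lem:amgm} the product $\prod_i c_i$ of positive integers with this sum is maximized at $1^{3q} = 1$, attained exactly when every $c_i = 1$, i.e., when the chosen sets form an exact cover of $U$. If no exact cover exists, either some $c_i = 0$ (forcing the Nash product to $0$) or the coverage distribution is strictly unbalanced and Lemma~\ref{lem:amgm} gives a strictly smaller product, so the target $V^*$ is not achieved in $\I'$.

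The main technical obstacle will be calibrating the padding so that ``pick $q$ set-goods plus the forced personals'' is strictly optimal --- in particular, so that the MNW-optimal allocation cannot improve by swapping a forced personal for an extra set-good (which naively inflates the primaries' product by factors like $(c+1)/c$). I intend to address this by giving each dummy several personal goods of her own (so that her utility equals the number of her personals selected) and tuning the per-dummy multiplicity together with $D$ so that, by the tight integer AM-GM structure of Lemma~\ref{lem:amgm}, removing any personal good strictly reduces the dummies' product contribution by more than any single swap of a personal for a set-good could restore on the primary side. Once this calibration is verified, the reduction is manifestly polynomial-time, uses only binary valuations, and witnesses $\s{NP}$-hardness of $\publicmnw$ in the $k \geq n$ regime, complementing Theorem~\ref{kltn}.
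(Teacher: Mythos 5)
Your reduction has a genuine gap, and it sits exactly where you flagged it: the calibration of the padding gadget is not a routine verification but the place where the construction breaks. The core problem is that your primary agents value \emph{only} set-goods, so in a no-instance an allocation with just $q$ set-goods leaves some primary at utility $0$ and hence Nash welfare $0$. Under the MNW convention (maximize the number of agents with nonzero utility first), trading a dummy's personal good for an additional set-good that covers an uncovered primary is an unbounded improvement --- it moves an agent from $0$ to positive --- so no finite tuning of per-dummy multiplicities can prevent the optimum from using more than $q$ set-goods. There is also a counting obstruction: to get $k' \ge n' = 3q + D$ while reserving only $q$ slots for set-goods, the dummies must absorb at least $2q + D$ slots, so some dummy holds $M \ge 2$ personal goods; dropping one then costs only a factor $(M-1)/M \ge 1/2$, while one extra set-good can multiply the primaries' product by up to $2^3$. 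Consequently in a no-instance the optimum need not have your claimed structure, and the threshold $V^*$ does not separate yes- from no-instances. Note also that your invocation of Lemma~\ref{lem:amgm} is vacuous in your setting: with exactly $q$ triples selected and all coverages $c_i \ge 1$ summing to $3q$ over $3q$ agents, every $c_i$ is forced to equal $1$, so the lemma is not actually distinguishing balanced from unbalanced distributions.

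The paper's proof avoids all of this by padding differently: it adds $n$ dummy \emph{goods} that every agent (including the primaries) values at $1$, and shows any Nash-optimal allocation must include all of them. This lifts every agent's utility to at least $n$, so covering an element changes a factor from $n$ to $n+1$ rather than from $0$ to $1$, the degenerate zero-welfare case never arises, and Lemma~\ref{lem:amgm} (with $\ell = n$) does real work: among allocations with total utility $n^2 + dr$ the product is maximized exactly when $dr$ agents get $n+1$ and the rest get $n$, i.e., when the chosen sets are pairwise disjoint. (The paper reduces from exact regular set packing rather than X3C, but that difference is cosmetic.) If you want to salvage your approach, the fix is to make your padding goods universally valued so that the primaries' utilities are bounded away from zero, at which point your argument essentially becomes the paper's.
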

\begin{proof}
We call the decision version of finding an MNW allocation as $\PGNSW$. An instance of $\PGNSW$ is given by $\I = (\A, \G, k, \{v_i\}_{i \in \A}, T)$. Here $\A$, $\G$, $k$ and $\{v_i\}_{i \in \A}$ are exactly as defined in the $\public$ instance. In addition we have an integer $T$. The problem is to \textit{decide} whether there is an allocation with Nash welfare at least $T$.

We reduce exact regular set packing $(\ERSP)$ to $\PGNSW$. In the input to this problem, there are $n$ elements $X=\{x_1,\dots,x_n\}$, family of subsets $\mathcal{F}=\{F_1,\dots,F_m\}$ where each $F_j \subseteq X$ and $|F_j|=d$. The problem is to compute a subfamily $\mathcal{F'}\subseteq \mathcal{F}$, $|\mathcal{F'}|=r$, s.t. for all $F_i\neq F_j \in \mathcal{F'}, F_i \cap F_j = \emptyset$. Let $\I = (X,\mathcal{F},d,r)$ be an instance of $\ERSP$.

We construct an instance, $\I' = \{\A, \G, k, \{v_i\}_{i \in \A}, T\}$ of $\PGNSW$ as follows. We create a set $\A = [n]$ of $n$ agents, a set $\G = \{g_1,\dots,g_m\}\cup \{d_1,\dots,d_n\}$ of $m+n$ public goods. For any agent $i \in \A$ and good $g_j \in \G$, $v_i(g_j) = 1$ if $x_i \in F_j$ else 0. For any agent $i \in \A$ and good $d_j \in \G$, $v_i(d_j)=1$. Finally we set, $ k = r+n$, and $T = ((n+1)^{dr}n^{n-dr})^{1/n}$. We claim that $\I$ is a yes-instance for $\ERSP$ iff $\I'$ is a yes-instance for $\PGNSW$. 

Let $\mathcal{F'}$ be any $\ERSP$ solution, so $|\mathcal{F'}|=r$. Then corresponding to every $F_j \in \mathcal{F'}$, we pick $g_j$. We also pick all goods $d_j$. We have thus picked $r+n$ goods. Each picked $g_j$ gives value to exactly $d$ agents, and no agent gets value from two different $g_j$'s because of set disjointness. So exactly $dr$ agents get value 1 from picked $g_j$'s and every agent gets a value of $n$ from the $d_j$'s. Thus the Nash welfare is exactly $((n+1)^{dr}n^{n-dr})^{1/n} = T$, as required by $\PGNSW$.

We now prove that if $\I'$ has an allocation $\x$ (of size $r+n$) with Nash welfare at least $T = ((n+1)^{dr}n^{n-dr})^{1/n}$ then $\I$ is a yes-instance for $\ERSP$. Suppose $\x$ does not include all goods $d_j$. From $\x$, we create an allocation $\x'$ by adding the goods $d_j$ not in $\x$ and removing an equal number of goods from $\x$ to maintain cardinality. Clearly, $\x'$ Pareto-dominates $\x$. Thus, we have:
\begin{align}\label{eqn:NashProduct}
    \NW(\x') > \NW(\x) = ((n+1)^{dr}n^{n-dr})^{1/n}.
\end{align}
Now, $\x'$ has all goods $d_j$ and $r$ other goods, each of which are liked by exactly $d$ agents. Thus, $\sum_{i=1}^n v_i(\x') = n^2 + dr$ and $v_i(\x') \geq n, \forall i \in [n]$. So from Lemma \ref{lem:amgm}, $\NW(\x') \leq ((n+1)^{dr}n^{n-dr})^{1/n}$ contradicting (\ref{eqn:NashProduct}). Thus $\x$ must have all goods $d_j$. Hence, $\sum_{i = 1}^n v_i(\x) = n^2 + dr$. Again, from Lemma (\ref{lem:amgm}), $\left(\prod_{i = 1}^{n}v_i(\x)\right)^{1/n} = T$ if and only if, $v_i(\x) = (n + 1)$ for $i \in S$ where $S \subseteq [n], |S| = dr$ and $v_i(\x) = n$ for all $i \in [n] \setminus S$. Thus, the goods in $\x$ give value to disjoint agents. This implies that the corresponding sets are disjoint, showing that $\I$ is a yes-instance. 
\end{proof}

Next, we show $\s{NP}$-hardness even when there are only two agents.
\begin{restatable}{theorem}{thmnpconst}\label{thm:np-const}
$\publicmnw$ is $\s{NP}$-hard, even for two agents.
\end{restatable}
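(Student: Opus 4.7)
My plan is to reduce from \emph{Balanced Partition}: given non-negative integers $a_1,\ldots,a_{2t}$ with $\sum_j a_j = 2S$, decide whether some $\x \subseteq [2t]$ with $|\x| = t$ has $\sum_{j \in \x} a_j = S$. This variant is NP-hard by a trivial padding reduction from standard Partition (given an instance of size $t$, append $t$ zeros). The key tool will be AM--GM: for two non-negative reals with a fixed sum, their product is maximized exactly when they are equal. I would construct a two-agent $\public$ instance in which every size-$k$ allocation has $v_1(\x)+v_2(\x)$ equal to a fixed constant, and tune the parameters so that the AM--GM equality condition decodes exactly to the Partition target.

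Concretely, build an instance with $n=2$, $m=2t$ goods, $k=t$. Let $A = \max_j a_j$, $b_j = t(a_j + A)$, and $M = 2(tA + S)$. Set $v_1(g_j) = b_j$ and $v_2(g_j) = M - b_j$; then $b_j \in [tA, 2tA]$ and $M \ge 2tA \ge b_j$, so $v_2 \ge 0$, and $M$ is a non-negative integer. For any $|\x|=t$, one has $v_1(\x) + v_2(\x) = tM$, and AM--GM gives $v_1(\x)\,v_2(\x) \le (tM/2)^2$ with equality iff $v_1(\x) = tM/2 = t(tA+S)$. Using $|\x| = t$, this equation unwinds to $t\sum_{j\in\x} a_j + t^2 A = t(tA+S)$, i.e.\ $\sum_{j\in\x} a_j = S$, which is precisely the Balanced Partition condition.

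Since all valuations are non-negative integers and (outside the degenerate $S=0$ case) strictly positive, adding any missing good strictly increases both $v_i$'s, so any Nash-optimum attains $|\x| = k$; and for $|\x| = s < k$ the maximum product is at most $(sM/2)^2 < (tM/2)^2$. Because $v_1 v_2$ is integer-valued, $\max v_1 v_2$ equals $(tM/2)^2$ iff Balanced Partition is a yes-instance. Thus deciding whether an MNW allocation attains Nash product at least $(tM/2)^2$ is NP-hard, and hence so is computing an MNW allocation with two agents. The main obstacle is the algebraic bookkeeping --- ensuring simultaneously that $M$ is an integer, that $v_2(g_j) \ge 0$ for every $j$, and that the AM--GM balance equation $v_1(\x)=tM/2$ decodes cleanly to $\sum_{j\in\x} a_j = S$ --- which the shift by $A$ and scale factor $t$ resolve in one shot.
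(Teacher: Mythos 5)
Your proposal is correct and follows essentially the same route as the paper: a reduction from equal-sized (balanced) Partition to a two-agent instance in which $v_1(\x)+v_2(\x)$ is constant over all size-$k$ selections, so that by AM--GM the Nash product hits its ceiling exactly when the chosen goods sum to the Partition target. Your choice of constants (shifting by $A=\max_j a_j$ and scaling by $t$) is a slightly cleaner way to keep all values non-negative integers than the paper's shift by $R$ with $C=2R/m$, but the argument is the same.
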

\begin{proof}
We present a reduction from Equal Sized Partition, $\s{EQSP}$ to $\PGNSW$ problem with two agents. The standard Partition problem takes as input a set, $S = \{a_i\}_{i \in [|S|]}$ of non-negative integers, with $|S| = n$ and asks if there exist $S_1, S_2 \subseteq S$ such that $S_1 \cup S_2 = S, S_1 \cap S_2 = \emptyset$ and $\sum_{i \in S_1} a_i = \sum_{i \in S_2} a_i$. Partition problem has been proven to be $\s{NP}$-hard \cite{mertens2006easiest}. The equal sized Partition problem puts a further constraint of $|S_1| = |S_2|$.
This can be shown to be $\NP$-hard by reducing from Partition problem itself. We take an instance of the Partition and add $|S|$ zeroes to it. This new instance has a equal sized partition if and only if the original instance had a partition. 

We create an instance of $\PGNSW$ as follows, $\I = \{\A, \G, k, \{v_i\}_{i \in \A}, T\}$ as follows: $\A = [2], \G = [m]$. The value $v_{ij}$ of agent $i$ for good $j$ is given by:
\[
v_{ij} =
\begin{cases}
a_j + R, \text{ if } i=1 \\
C + R - a_j, \text{ if } i=2
\end{cases}
\]
where $R = \sum_{j\in[m]} a_j$. and $C = 2R/m$. Further we set $k=m/2$, and $ T = (R+Rm)/2$.

Initially, let $v_1(j) = a_j, j \in [m]$ and $v_2(j) = \frac{{2 \cdot \sum_{j = 1}^m a_j}}{m} - a_j$. Now, some of the $v_2(j)$ might become negative. Define $R = \text{max}\{-1 \times \min_{j \in [m]}v_2(j), 0\}$. The valuations of the agents are defined as $v_i(j) = R + v_i(j), i \in [2]$. Finally, we set $k = m/2$ and $T = \left( \frac{\sum_{j = 1}^m a_j}{2}+\frac{R \cdot m}{2} \right) $

We prove that $\s{EQSP}$ is a yes instance if and only if $\PGNSW$ is a yes instance.

$(\Rightarrow)$ If $\s{EQSP}$ is a yes instance, there is a set $S_1 \subseteq S$ such that $\sum_{j \in S_1} a_j = \sum_{j \in (S \setminus S_1)} a_j = R/2$ and $|S_1| = m/2$. We create an allocation $\x$ as follows : corresponding to each $a_j \in S_1$, add $j \in \x$. Thus, $v_1(\x) = v_2(\x) = R/2 + m\cdot R/2$ implying that $\NW(\x) = T$.

$(\Leftarrow)$ 
Suppose $\PGNSW$ is a yes instance. Therefore, there is an allocation $\x$ such that $\NW(\x) \geq T$ and $|\x| = m/2$. The value agent $1$ gets from $\x$ is 
\begin{equation*}
v_1(\x) = \sum_{i \in \x} a_i + \frac{R \cdot m}{2},
\end{equation*} 
and the value agent $2$ gets from $\x$ is
\begin{equation*}
v_2(\x) =  (C+R)\cdot \frac{m}{2} - \sum_{i \in \x} a_i. 
\end{equation*}
Therefore, the Nash product is $\left( \sum_{i \in \x} a_i + \frac{R \cdot m}{2} \right) \cdot \left( (C+R)m/2 - \sum_{i \in \x} a_i \right)$.
Note that since $C=2R/m$, the above expression takes its maximum value of  $\left(\frac{\sum_{j = 1}^m a_j}{2}+\frac{R \cdot m}{2}\right)^2$ when $\x$ is such that $\sum_{i \in \x} a_i = R/2$.

Thus, the $\NW(\x) \geq T$ if and only if there exists $\x$ such that $\sum_{i \in \x} a_i = R/2$. Then $(S_1,S_2)$ is a solution for $\s{EQSP}$ where $S_1 = \{a_j : j\in \x\}$, and $S_2 = S\setminus S_1$. 
\end{proof}

We next show a similar hardness results for computing leximin-optimal allocations, which as we show, apply even for instances with binary values. 

\begin{restatable}{theorem}{thmnphardlmbin}\label{thm:np-hard-lm-bin}
$\publiclex$ is $\NP$-hard, even when the valuations are binary.
\end{restatable}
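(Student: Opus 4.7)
My plan is to reuse the reduction from Exact Regular Set Packing ($\ERSP$) constructed in Theorem~\ref{thm:np-k>n} and to argue that the leximin value of the same instance already distinguishes yes- from no-instances. Concretely, from an $\ERSP$ instance $(X,\mathcal{F},d,r)$ with $|X|=n$ and each $F_j\in\mathcal{F}$ of size $d$, I would form the identical binary-valued $\public$ instance $\I'$ with $n$ agents, goods $\{g_1,\ldots,g_m\}\cup\{d_1,\ldots,d_n\}$, valuations $v_i(g_j)=1$ iff $x_i\in F_j$ and $v_i(d_j)=1$, and cardinality $k=r+n$.

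The first step is to show that every leximin-optimal allocation $\x$ of $\I'$ must contain all $n$ dummy goods. If some $d_j\notin \x$, then $|\x|=k$ forces some $g_{j'}\in \x$; swapping $g_{j'}$ out for $d_j$ weakly increases every agent's utility and strictly increases the utility of any agent with $v_i(g_{j'})=0$ (such agents exist whenever $n>d$, which holds for any nontrivial instance). Since Pareto-improvements leximin-dominate, this contradicts optimality. Hence exactly $r$ of the $g_j$'s are chosen, each agent gets a baseline value of $n$ from the dummies, and writing $v_i(\x)=n+e_i$ where $e_i$ is the number of chosen $g$-goods covering agent $i$, one has $\sum_i e_i = dr$.

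The main step compares leximin profiles across the two cases. In a yes-instance, any $r$ pairwise disjoint sets give $e_i\in\{0,1\}$ with exactly $dr$ agents at $e_i=1$, so the sorted profile is $(n,\ldots,n,n+1,\ldots,n+1)$ with exactly $n-dr$ copies of $n$. In a no-instance, \emph{every} feasible choice of $r$ sets contains some pair that overlaps, so some agent $a$ is covered at least twice and $e_a\ge 2$; the remaining excess $\sum_{i\neq a}e_i\le dr-2$ then allows at most $dr-2$ other agents to have $e_i\ge 1$, forcing at least $n-dr+1$ agents to sit at value exactly $n$. Comparing sorted profiles, the first $n-dr$ coordinates agree (all equal to $n$), but at position $n-dr+1$ the yes-profile is $n+1$ while every achievable no-profile is $n$. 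The hard part is precisely this pigeonhole step applied uniformly to \emph{all} leximin-optimal allocations in the no-case; it is forced cleanly by the fixed excess-sum $dr$ together with integrality of the $e_i$.

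Putting the pieces together, a polynomial-time algorithm for $\publiclex$ would let one read off the multiplicity of the smallest value in the leximin-optimal allocation of $\I'$ and decide $\ERSP$ in polynomial time, yielding $\NP$-hardness of $\publiclex$ even when all valuations are binary. An attractive feature of this plan is that it piggybacks on the already-verified $\ERSP$ reduction and requires no new construction; all novelty sits in the profile-comparison argument of the third paragraph.
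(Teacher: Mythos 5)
Your proof is correct, but it takes a genuinely different route from the paper. The paper reduces from $c$-monotone SAT: it creates one agent per clause plus a dummy agent, one good per variable plus $m-c+1$ dummy goods, and only needs to inspect the \emph{minimum} utility in the leximin-optimal allocation (every clause-agent reaches value $m-c+2$ iff some $c$ variables satisfy all clauses). You instead recycle the $\ERSP$ gadget from Theorem~\ref{thm:np-k>n} and distinguish yes- from no-instances by the $(n-dr+1)$-th coordinate of the sorted utility profile. Your key steps all check out: Pareto improvements do leximin-dominate, so every leximin-optimal allocation contains all $n$ dummies and exactly $r$ set-goods (for nontrivial instances with $n>d$ and $m>r$); the excess vector $(e_i)$ has fixed sum $dr$, so at most $dr$ agents can exceed the baseline $n$, and an overlap forces some $e_a\ge 2$ and hence at least $n-dr+1$ agents stuck at value $n$, whereas a packing leaves only $n-dr$ such agents. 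Conversely, achieving only $n-dr$ agents at value $n$ forces all $e_i\in\{0,1\}$, i.e.\ a disjoint packing, so the leximin profile at that position cleanly decides $\ERSP$. What your approach buys is economy -- one gadget serves both the MNW and leximin hardness proofs -- at the cost of a slightly more delicate use of the leximin objective: you must read off a multiplicity deep in the sorted profile (and handle the degenerate cases $n\le dr$, $n\le d$, $m\le r$ by trivial preprocessing), whereas the paper's SAT reduction only queries the first coordinate and, as noted in its remark, also yields hardness for $k<n$ by dropping the dummies.
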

\begin{proof}
We reduce from $c$-monotone SAT. An instance of $c$-monotone SAT is given by $n$ variables, $m$ clauses and a parameter $c$. The clauses in monotone SAT are restricted to have only positive literals. The question is to determine if there is a satisfying assignment with at most $c$ of the variables are set to true. We reduce this problem to an instance of public goods leximin as $\I = (\A, \G, k, \{v_i\}_{i \in \A})$ with $\A = [m+1]$, $\G = [n+m-c+1]$, $k= m+1$. We have $m$ agents corresponding to the $m$ clauses and one dummy agent. We have $n$ goods corresponding to the $n$ variables and $m-c+1$ extra dummy goods that all agents like, i.e., have value 1 for. An agent corresponding to a clause likes only the goods corresponding to the variables that appear in the clause. The dummy agent likes all the dummy goods and does not like any other goods. Thus, formally, for each variable $x_j, j \in [m]$, we have a good $j \in \G$. For each clause $c_i$, we have an agent $i \in \A$. We have for $i \leq m, j \leq n$, $v_i(j) = 1$ if $x_j \in c_i$. $v_i(j) = 1$ for all $i \in [m+1], m+1 \leq j \leq m+1-c$.
We claim that the monotone c-SAT is a yes instance if and only if every agent except the dummy agent receive a value of at least $m+2-c$ in the leximin allocation and the dummy agent gets a value $m+c-1$.

\noindent$\implies$ Suppose the monotone SAT is a yes instance. Thus, there are $c$ variables that can be set to $1$ and every clause is satisfied. Pick the goods corresponding to these $c$ variables and pick all the dummy goods. The dummy agent gets a value of $m-c+1$ and all other agents get a value of at least $m-c+2$.

\noindent $\impliedby$ We first observe that the any leximin-optimal allocation will always include all the dummy goods. Otherwise we could remove any good and add the dummy good that is not included. This does not reduce the value of any of the agents corresponding to clauses and strictly increases the value of the dummy agent, contradicting the leximin-optimality. Thus, all dummy goods are always included. This gives all agents a value of $m-c+1$ from $m-c+1$ dummy goods. We have to select further $c$ goods. If the leximin-optimal is giving at least $m-c+2$ value to all agents except the dummy agent, then we have $c$ goods which together give value at least $1$ to all the agents corresponding to clauses. Thus, we can set the variables corresponding to these goods to $1$ and this satisfies all the corresponding clauses. 
\end{proof}
\begin{remark}
We note that the above reduction without the dummy goods and dummy agent reduces to a public goods instance $\I = (\A, \G, k, \{v_i\}_{i \in \A})$ with $|A| = m$, $k = c$. Since in c-monotone SAT, $c < m$ (otherwise we can trivially answer YES), this proves the $\NP$-hardness of computing a leximin-optimal allocation for $k < n$.
\end{remark}

\begin{restatable}{theorem}{thmhardlmconst}\label{thm:hard-lm-const}
$\publiclex$ is $\s{NP}$-hard, even for two agents.
\end{restatable}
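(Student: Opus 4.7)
The plan is to reduce from Equal Sized Partition ($\s{EQSP}$), whose $\s{NP}$-hardness was established in the proof of Theorem~\ref{thm:np-const}. I would reuse essentially the same construction as in that theorem: given an $\s{EQSP}$ instance $S = \{a_j\}_{j\in[m]}$, build a $\public$ instance with $\A = [2]$, $\G = [m]$, $k = m/2$, and valuations $v_1(j) = a_j + R$ and $v_2(j) = C + R - a_j$, where $R = \sum_j a_j$ and $C = 2R/m$. The $R$ offset ensures all valuations are nonnegative integers, and nothing else about the reduction changes.

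The key structural observation driving the leximin argument is that $v_1(j) + v_2(j) = C + 2R$ is the same for \emph{every} good $j$. Consequently, for any allocation $\x$ with $\abs{\x} = m/2$, the sum $v_1(\x) + v_2(\x) = (m/2)(C + 2R)$ is a fixed constant depending only on the instance, not on which goods are chosen. Since maximizing $\min(v_1(\x), v_2(\x))$ subject to a fixed sum is achieved precisely when $v_1(\x) = v_2(\x)$, the two-agent leximin objective on this instance boils down to balancing the utilities.

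I would then execute the proof in three short steps. First, argue that any leximin-optimal $\x$ must satisfy $\abs{\x} = k = m/2$: because $R$ is chosen large enough that $v_1(j), v_2(j) > 0$ for all $j$, adding any unpicked good strictly increases both utilities and thus strictly leximin-dominates, so the cardinality constraint is tight at optimum. Second, equate $v_1(\x) = \sum_{j \in \x} a_j + Rm/2$ with $v_2(\x) = R + Rm/2 - \sum_{j \in \x} a_j$ to obtain $\sum_{j \in \x} a_j = R/2$; combined with $\abs{\x} = m/2$, this is exactly an $\s{EQSP}$ solution. Third, since $\min(v_1(\x), v_2(\x)) \le (v_1(\x)+v_2(\x))/2 = R/2 + Rm/2$ always, with equality iff $v_1(\x) = v_2(\x)$, the leximin-optimal value attains the threshold $R/2 + Rm/2$ iff an $\s{EQSP}$ solution exists. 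Hence a polynomial-time leximin oracle decides $\s{EQSP}$.

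There is no serious obstacle beyond verifying these algebraic identities, since the heavy lifting (choosing $R$ and $C$ so the sum $v_1(\x)+v_2(\x)$ is constant across all allocations of size $m/2$) has already been done in Theorem~\ref{thm:np-const}. The only subtlety is confirming that the leximin secondary criterion (maximizing the larger utility) does not interfere: once the minimum $R/2 + Rm/2$ is attained by some $\x$, we automatically have $v_1(\x) = v_2(\x) = R/2 + Rm/2$, so the sorted utility vector is uniquely determined, and the reduction is well-defined regardless of which leximin-optimal allocation the oracle returns.
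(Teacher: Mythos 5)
Your proposal is correct and follows essentially the same route as the paper: a reduction from Equal Sized Partition reusing the construction of Theorem~\ref{thm:np-const} with threshold $T = (R+Rm)/2$, where the paper's case analysis ($v_1(\x)\ge T$ forces $v_2(\x)\le T$, hence both equal $T$) is exactly your constant-sum observation $v_1(\x)+v_2(\x) = 2T$ in disguise. Your remark that the leximin secondary criterion cannot interfere (since attaining the min forces both utilities to equal $T$) is a worthwhile clarification, but the argument is otherwise the paper's own.
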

\begin{proof}
We in fact show a stronger result that finding an allocation that maximizes the minimum value itself is $\NP$-hard. Formally, we denote an instance of max-min problem as $\I = \{\A, \G, k,  \{v_i\}_{i \in \A}, T\}$. The problem is given a set $\A$ of agents, a set $\G$ of goods, we want to select $k$ goods such that the minimum value received by any agent is at least $T$. The proof follows using the same reduction as proof of Theorem \ref{thm:np-const}. We set $k = m/2$ as before and set $T = \frac{R+Rm}{2}$. We prove there is an allocation where every agent gets a value of at least $T$ if and only if $\s{EQSP}$ is a yes instance. If $\s{EQSP}$ is a yes instance, then by creating partition of goods as we did in Proof of Theorem \ref{thm:np-const} we get an allocation where both agents get value exactly $T$. On the other hand, suppose there exists an allocation, $\x$ where both agents receive value at least $T$ and $|\x| = m/2$. Then, agent $1$ has value \[
v_1(\x) = Rm/2+\sum_{i \in \x}a_i \ge T,
\]
from $\x$ and agent $2$ has value 
\[
v_2(\x) = (C+R)m/2 - \sum_{i \in \x} a_i \ge T.
\]
By assumption, $v_1(\x) \geq T$, thus we have $\sum_{i \in \x}a_i \geq T - Rm/2 = R/2$. Thus, we have $v_2(\x) = (C+R)m/2 - \sum_{i \in \x} a_i \leq (C+R)m/2 - R/2 = Rm/2 + R - R/2 = Rm/2+R/2 = T$. Thus $v_2(\x) \leq T$. Together with $v_2(\x) \geq T$ this means $v_2(\x) = T$, which implies $v_1(\x) = T$. Thus, $\sum_{i\in\x} a_i = R/2$, i.e., $\s{EQSP}$ is a yes instance. 
\end{proof}

Using the reductions of Theorems~\ref{thm:mnwpgtopd} and \ref{thm:lmpgtopd} and the $\s{NP}$-hardness results of this section, we obtain $\s{NP}$-hardness results for computing MNW and leximin allocations in the public decision making model. In fact, Observation~\ref{rem:reductionvalues} implies that this $\s{NP}$-hardness remains for the MNW problem even with the valuations are binary.

\begin{corollary}\label{cor:pdmhardness}
$\decisionmnw$ is $\s{NP}$-hard, even when all values are binary.
\end{corollary}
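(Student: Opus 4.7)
The plan is to derive the corollary by composing the reduction of Theorem~\ref{thm:mnwpgtopd} with the hardness result of Theorem~\ref{thm:np-k>n}, while carefully tracking which values appear in the constructed instance. Concretely, I would start from an arbitrary $\publicmnw$ instance $\I$ with binary valuations; by Theorem~\ref{thm:np-k>n}, computing an MNW allocation of such an $\I$ is already $\s{NP}$-hard. Applying the polynomial-time reduction of Theorem~\ref{thm:mnwpgtopd} to $\I$ produces a $\decisionmnw$ instance $\I'$ whose MNW-optimal outcomes correspond (in polynomial time) to MNW-optimal allocations of $\I$, so a polynomial-time algorithm for $\decisionmnw$ on binary instances would immediately yield one for $\publicmnw$ on binary instances.

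The key step, and the one that makes the corollary more than a routine consequence of the reduction chain, is to verify that $\I'$ itself is binary-valued. This is exactly the content of Observation~\ref{rem:reductionvalues}, applied to this specific construction. Inspecting the valuations defined in the proof of Theorem~\ref{thm:mnwpgtopd}: each original agent $i \in [n]$ values alternative $(j,1)$ at $v_{ij}$ and alternative $(j,2)$ at $0$, while each type-A and type-B dummy agent assigns value $1$ to exactly one alternative per issue and $0$ to the other. Thus every value in $\I'$ lies in $V \cup \{0,1\}$, and when $V \subseteq \{0,1\}$ to begin with, the resulting instance is still binary.

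Finally I would check that the reduction remains polynomial-time on binary inputs. With $V = 1$ we have $T = \lceil 2mn \log m \rceil$, and the constructed instance has $n + mT$ agents and $m$ two-alternative issues, all polynomial in $|\I|$. I do not expect any real obstacle here; the work is simply to point at the right three ingredients (the hardness of Theorem~\ref{thm:np-k>n}, the MNW-preserving reduction of Theorem~\ref{thm:mnwpgtopd}, and the binary-preservation observed in Observation~\ref{rem:reductionvalues}) and note that their composition is a polynomial-time reduction from binary $\publicmnw$ to binary $\decisionmnw$.
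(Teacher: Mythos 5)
Your proposal is correct and matches the paper's own argument exactly: the paper derives Corollary~\ref{cor:pdmhardness} by composing the binary-valuation hardness of Theorem~\ref{thm:np-k>n} with the reduction of Theorem~\ref{thm:mnwpgtopd}, invoking Observation~\ref{rem:reductionvalues} to note that the constructed $\decision$ instance introduces only the values $0$ and $1$ and hence stays binary. Nothing further is needed.
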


Using our reductions (Theorems~\ref{thm:mnwpgtopd} and \ref{thm:lmpgtopd}) together with the $\s{NP}$-hardness of $\publicmnw$ and $\publiclex$ (Theorems~\ref{thm:np-k>n} and \ref{thm:np-hard-lm-bin}) implies that:
\begin{corollary}
The problems $\decisionmnw$ and $\decisionlex$ are $\s{NP}$-hard.
\end{corollary}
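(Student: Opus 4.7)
The plan is a direct chaining of the reductions with the known hardness results, with essentially no additional work required.

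First, for $\decisionmnw$, I would invoke Theorem~\ref{thm:mnwpgtopd}, which gives a polynomial-time reduction $\publicmnw \le \decisionmnw$. Combined with Theorem~\ref{thm:np-k>n}, which establishes $\s{NP}$-hardness of $\publicmnw$ (in fact even for binary valuations), the existence of the reduction means that any polynomial-time algorithm for $\decisionmnw$ would yield one for $\publicmnw$, contradicting $\s{P} \neq \s{NP}$. Hence $\decisionmnw$ is $\s{NP}$-hard.

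Next, for $\decisionlex$, I would apply the same pattern: Theorem~\ref{thm:lmpgtopd} gives a polynomial-time reduction $\publiclex \le \decisionlex$, and Theorem~\ref{thm:np-hard-lm-bin} establishes $\s{NP}$-hardness of $\publiclex$. Composing these yields $\s{NP}$-hardness of $\decisionlex$ by the identical argument.

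The corollary is essentially a one-line observation, so there is no real obstacle; the work has already been done in the two reductions (Theorems~\ref{thm:mnwpgtopd} and~\ref{thm:lmpgtopd}) and the hardness proofs (Theorems~\ref{thm:np-k>n} and~\ref{thm:np-hard-lm-bin}). I would make the proof explicit by simply stating the two chains $\publicmnw \le_P \decisionmnw$ and $\publiclex \le_P \decisionlex$, noting that the source problems of these reductions are $\s{NP}$-hard, and concluding the target problems inherit $\s{NP}$-hardness.
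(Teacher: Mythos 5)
Your proposal is correct and matches the paper's own argument exactly: the corollary is stated as an immediate consequence of composing the polynomial-time reductions of Theorems~\ref{thm:mnwpgtopd} and~\ref{thm:lmpgtopd} with the $\s{NP}$-hardness results of Theorems~\ref{thm:np-k>n} and~\ref{thm:np-hard-lm-bin}. No further comment is needed.
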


\section{Algorithms for MNW and Leximin
}\label{sec:algorithm} 
In light of the above computational hardness, we turn to approximation algorithms and exact algorithms for special cases. We first present an algorithm that provides an $O(n)$ factor approximation to MNW and satisfies fairness properties of $\s{RRS}$, $\s{Prop}1$ when valuations $\{v_i\}_{i \in \A}$ are \textit{monotone} ($v_i(S) \leq v_i(S \cup g)$ for all $S \subseteq \G$ and $g \in \G \setminus S$) and \textit{subadditive} (for all $S_1 \subseteq \G, S_2 \subseteq \G$, $v_i(S_1) + v_i(S_2) \geq v_i(S_1 \cup S_2)$). The class of subadditive valuations captures complement-free goods, and subsumes additive valuations. Our algorithm assumes access to demand oracles\footnote{Subadditive valuations are set functions and cannot in general represented efficiently. We thus assume access to the functions through some oracles. Given a set of prices $p_j$ for each good $j \in \G$, a demand oracle returns any set $S$ that maximizes $v_i(S) - \sum_{j \in S} p_j$.} for the subadditive valuations. We use the following subroutine, $\s{Maximize}$, from~\cite{subadd} which takes:
\begin{itemize}
    \item Input: Set of goods, $\G$, the valuation function $v_i$ of the agent $i$, and an integer $r$; and returns:
    \item Output: $\x\subseteq \G$, s.t. $v_i(\x) \ge \frac{1}{2} \max_{S\subseteq \G, |S|\le r} v_i(S)$ 
\end{itemize}

\noindent Our algorithm, $\s{AlgGreedy}$, has two steps:
\begin{itemize} \label{algo}
    \item For all $i \in \A$, $\x_i \gets \s{Maximize}(\mathcal{G}, v_i, \lfloor{\frac{k}{n}}\rfloor)$
    \item Return $\x \gets \cup_{i \in \A} \x_i$
\end{itemize}
For additive valuations, we assume that $\s{Maximize}$ returns a set of $\lfloor k/n \rfloor$ most-preferred goods for each agent. This algorithm enables us to show that:

\begin{restatable}{theorem}{thmalgoapx}
There exists a polynomial-time algorithm for the problem of $\public$ allocation (where $k\ge n$ and agents have monotone, subadditive valuations) that returns an allocation which satisfies $\s{RRS}$, $\frac{1}{2}$-$\s{Prop}$, and approximates the MNW to a factor of $O(n)$. Further, when the valuations are additive, the allocation satisfies $\s{Prop}1$.
\end{restatable}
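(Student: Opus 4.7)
The plan is to verify each claim of the theorem in turn, leveraging the guarantee of $\s{Maximize}$ together with the fairness-notion relations established earlier in Lemmas~\ref{lem:rrs-prop} and~\ref{lem:rrs-prop1}. First, by construction, $|\x| \leq \sum_{i\in\A} |\x_i| \leq n \cdot \lfloor k/n \rfloor \leq k$, so $\x$ is feasible. The algorithm runs in polynomial time because $\s{Maximize}$ runs in polynomial time under demand-oracle access~\cite{subadd} and is invoked $n$ times.

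For the $\s{RRS}$ guarantee, observe that by monotonicity $v_i(\x) \geq v_i(\x_i)$. The guarantee of $\s{Maximize}$ then yields $v_i(\x_i) \geq \tfrac{1}{2}\max_{|S| \leq \lfloor k/n \rfloor} v_i(S) = \tfrac{1}{2}\rrs_i$ in the subadditive case, with equality ($v_i(\x_i) = \rrs_i$) in the additive case since $\s{Maximize}$ then returns agent $i$'s top $\lfloor k/n \rfloor$ goods. The $\tfrac{1}{2}$-$\prop$ bound follows by invoking Lemma~\ref{lem:rrs-prop}, which states that any $\alpha$-$\rrs$ allocation is $\alpha \cdot \tfrac{n}{2n-1}$-$\prop$; in the additive case this yields at least $\tfrac{n}{2n-1} \geq \tfrac{1}{2}$. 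For $\prop1$ under additive valuations, Lemma~\ref{lem:rrs-prop1} says any $\rrs$ allocation is $\prop 1$, and so applies directly.

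For the $O(n)$-approximation to MNW, I would first upper-bound the Nash welfare of any optimal allocation $\x^*$ by using $v_i(\x^*) \leq \max_{|S| \leq k} v_i(S) = n \cdot \prop_i$, so that $\NW(\x^*) \leq n \cdot \bigl(\prod_{i \in \A}\prop_i\bigr)^{1/n}$. The matching lower bound on $\NW(\x)$ comes from combining $v_i(\x) = \Omega(\rrs_i)$ (established above) with the $\rrs$-to-$\prop$ relation of Lemma~\ref{lem:rrs-prop} to get $v_i(\x) = \Omega(\prop_i)$ agent-wise, hence $\NW(\x) = \Omega\bigl((\prod_{i\in \A}\prop_i)^{1/n}\bigr) = \Omega(\NW(\x^*)/n)$, yielding the desired $O(n)$ approximation factor.

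No single step presents a substantial obstacle, since the analysis of $\s{Maximize}$ was already carried out in~\cite{subadd} and the heavy lifting connecting $\rrs$, $\prop$, and $\prop 1$ was done in Lemmas~\ref{lem:rrs-prop} and~\ref{lem:rrs-prop1}. The main care lies in handling the additive and subadditive cases separately when tracking approximation constants, and in noting that the $\prop 1$ claim crucially relies on $\x$ satisfying $\rrs$ \emph{exactly}, which is why it is stated only for the additive special case where $\s{Maximize}$ returns the top $\lfloor k/n \rfloor$ goods rather than a $\tfrac{1}{2}$-approximation thereof.
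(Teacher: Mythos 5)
Your proposal is correct and follows essentially the same route as the paper: feasibility plus the $\s{Maximize}$ guarantee give each agent a constant fraction of her $\rrs$ value, the $(2n-1)$-bundle partition argument (packaged in Lemma~\ref{lem:rrs-prop}) lifts this to a per-agent $\Omega(1/n)$ fraction of the optimum and hence to the $O(n)$ bound on $\NW$, and Lemmas~\ref{lem:rrs-prop} and~\ref{lem:rrs-prop1} supply the $\prop$ and $\prop 1$ claims. The only difference is cosmetic --- you route the MNW bound through $\prop_i$ as an intermediary whereas the paper partitions the optimal allocation $\x^*$ directly into at most $2n-1$ bundles of size at most $\lfloor k/n\rfloor$ --- and you are in fact more careful than the paper in flagging that the exact-$\rrs$ (and hence $\prop 1$) guarantee relies on the additive case, where $\s{Maximize}$ returns the true top $\lfloor k/n\rfloor$ goods rather than a $\tfrac{1}{2}$-approximation.
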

\begin{proof}
Let $\x = \cup_{i \in \A}\x_i$ be the output of $\s{AlgGreedy}$ and $\x^*$ be any MNW allocation for given instance $\mathcal{I}.$ Let $\x_{i, r}^*$ be the bundle of goods of size $r$ that maximize the valuation of agent $i$. We let $\x_1^*, \x_2^*, \ldots, \x_{2n-1}^*$ be any arbitrary partition of $\x^*$ with each part of size at most $\lfloor\frac{k}{n}\rfloor$. Note that such a partition is possible whenever $k \geq n$. We can write $k = n \cdot d + r$, with $r < n$ and thus create $r$ bundles each of size at most $1$ and $n$ bundles each of size at most $\lfloor k/n \rfloor$. Thus we have,
\begin{alignat*}{2}
    v_i(\x) &\geq v_i(\x_i) &\quad\text{(Monotonicity)} &\\
    &\geq \frac{1}{2}v_i(\x_{i, \lfloor\frac{k}{n}\rfloor}^*) & \quad\text{($\s{Maximize}$ subroutine)} & \\
    &\geq \frac{1}{2} \frac{1}{2n-1} \bigg(\sum_{j = 1}^{2n-1}v_i(\x_j^*)\bigg) \\ &\geq \frac{1}{2(2n-1)}v_i(\x^*). & \quad\text{(Subadditivity)}
\end{alignat*}
Finally, we prove the approximation for NW as follows.
\begin{equation*}
\begin{aligned}
\NW(\x) &= \bigg(\prod_{i = 1}^n  v_i(\x)\bigg)^\frac{1}{n}
\geq \bigg(\prod_{i = 1}^n \bigg( \frac{1}{2(2n-1)}v_i(\x^*)\bigg)\bigg)^\frac{1}{n}, \\
&= \frac{1}{2(2n-1)} \NW(\x^*).
\end{aligned}
\end{equation*}
We further note that $\s{AlgGreedy}$ by definition gives each agent her $\rrs$ share. Combining this with Lemma (\ref{lem:rrs-prop}) and Lemma (\ref{lem:rrs-prop1}), we get the Theorem.
\end{proof}

We now present pseudo-polynomial time algorithms for two special cases, namely constantly many types of agents, and constantly many types of goods. Our results apply to the more general model of \textit{budget constraints}. We denote an instance of this model by $\I = (\A, \G, B, \{c_j\}_{j \in \G}, \{v_i\}_{i \in \A})$. Each good $j \in \G$ has an associated integral cost $c_j$, and in a feasible allocation the sum of costs of the picked goods must not exceed the budget $B$. The MNW and leximin-objectives are defined as before, but over feasible allocations that satisfy the budget constraints. Since cardinality constraints are a special case of budget constraints with uniform cost, our hardness results apply for the budget model also.

\paragraph*{Constantly many types of agents.} We consider instances where the number of \textit{agent types} is constant. We say agents $i$ and $h$ have the same \emph{type} if $\forall j\in \G$, $v_{ij} = v_{hj}$. Using a dynamic-programming based algorithm, we prove the following theorem.
\begin{restatable}{theorem}{thmconstagents}\label{thm:constagents}
For a $\public$ allocation instance, $\I = (\A,\G,B, \{c_j\}_{j \in \G}, \{v_i\}_{i\in \A})$ with $t$ distinct types of agents, (i) an MNW allocation can be computed in time $O(m \cdot (mV)^t)$, (ii) a leximin-optimal allocation can be computed in time $O(m \cdot n \log n \cdot (mV)^t)$, where $V = \max_{i \in \A, j \in \G} v_{ij}$.
\end{restatable}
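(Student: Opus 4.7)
The plan is to use a dynamic program whose state space exploits the fact that only $t$ distinct valuation functions appear. Let the types be indexed $1,\ldots,t$, and for each good $j$ let $\mathbf{v}_j = (v_{1j},\ldots,v_{tj}) \in \mathbb{Z}_{\ge 0}^t$ be the vector of per-type values. A reachable utility profile for the types is a vector $\mathbf{u} = (u_1,\ldots,u_t)$ with $u_s \in \{0,1,\ldots,mV\}$, so there are at most $(mV+1)^t = O((mV)^t)$ such profiles. Crucially, knowing only the type profile $\mathbf{u}$ is sufficient to evaluate both the Nash welfare (which equals $\prod_{s\in[t]} u_s^{n_s}$ where $n_s$ is the number of agents of type $s$) and the leximin vector (obtained by expanding $\mathbf{u}$ into $n$ coordinates using the multiplicities $n_s$ and sorting).

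For the DP itself, define $f(j,\mathbf{u})$ to be the minimum total cost of a subset $S\subseteq\{1,\ldots,j\}$ such that the type utility vector induced by $S$ equals $\mathbf{u}$, with $f(j,\mathbf{u})=\infty$ when no such $S$ exists. Initialize $f(0,\mathbf{0})=0$ and $f(0,\mathbf{u})=\infty$ otherwise, and use the recurrence
\[
f(j,\mathbf{u}) = \min\bigl(f(j-1,\mathbf{u}),\; f(j-1,\mathbf{u}-\mathbf{v}_j)+c_j\bigr),
\]
where the second branch is only considered when $\mathbf{u}\ge \mathbf{v}_j$ coordinatewise. Each table entry is computed in $O(1)$ time, and there are $O(m\cdot (mV)^t)$ entries, giving the bound in (i). After filling the table, I would scan all profiles $\mathbf{u}$ with $f(m,\mathbf{u}) \le B$ and return one maximizing $\prod_{s} u_s^{n_s}$; the optimal subset of goods is then recovered by standard backtracking through $f$.

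For (ii), the DP is identical, but the selection step changes. For each feasible profile $\mathbf{u}$ with $f(m,\mathbf{u})\le B$, I expand it to the $n$-dimensional utility vector (repeating $u_s$ exactly $n_s$ times) and sort it in non-decreasing order in $O(n\log n)$ time; then I take the lexicographically largest such sorted vector, which by definition is leximin-optimal. This post-processing is $O((mV)^t \cdot n\log n)$, and combined with the DP fill yields the $O(m\cdot n\log n \cdot (mV)^t)$ bound claimed. Nothing in this plan is genuinely difficult; the only point that deserves care is verifying that the state space is indeed adequate, i.e., that two allocations with the same type utility vector are interchangeable for both objectives (which follows because agents of the same type are indistinguishable in $v_i$), and that storing minimum cost rather than feasibility is what lets a single DP simultaneously respect the budget $B$ and search for optima over all $B'\le B$.
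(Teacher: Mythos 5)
Your proposal is correct and follows essentially the same approach as the paper: a dynamic program over type-utility profiles bounded by $mV$ in each coordinate, storing the minimum cost to realize each profile, followed by a scan of all budget-feasible cells to maximize $\prod_s u_s^{n_s}$ or to select the leximin-best expanded utility vector. Your prefix-indexed recurrence with an $O(1)$ transition is a slightly cleaner implementation than the paper's ``largest-index-item'' formulation (which minimizes over all $j' < j$), but the underlying idea and the resulting bounds are the same.
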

\begin{proof}
Suppose for the $t$ different types of agents, there are $w_\ell$ agents of type $\ell$, for $\ell\in[t]$. We rename $v_\ell$ to be the valuation function of agents of type $\ell$. In any allocation $\x$, all agents of the same type receive the same utility. Hence:

\[\NW(\x) = \bigg(\prod_{\ell\in[t]} (v_i(\x))^{w_\ell}\bigg)^{1/n}.\]

Note that the maximum value of any good is $V$. Thus, for any feasible allocation $\x$ and any agent $i$, $v_i(\x) \le mV$.

Our algorithm populates a table $T[u_1,\dots,u_t,j]$, for $0 \le u_i \le mV$ for every $i\in [t]$, and $j\in \G$. We store in $T[u_1,\dots,u_t,j]$ the lowest possible value $b$ such that there exists a subset $S$ of goods of cost at most $b$, which gives each agent $i$ a utility of $u_i$, i.e., $v_i(S) = u_i$ for all $i\in [t]$, and $j$ is the largest index item in $S$. Then we have:

\begin{equation}\label{eqn:const-agent-dp}
T[u_1,\dots,u_t,j] = c_j+\min\limits_{j'< j} T[u_1-v_{ij}, \dots, u_t-v_{tj},j'].
\end{equation}

Thus we can populate the table $T$ using dynamic programming. For the base case, we create a dummy good, $j = 0$ which gives utility $0$ to each agent and has cost 0. The size of table is $(mV+1)^t\cdot (m+1)$, which is pseudo-polynomial in the size of the instance $\I$, since $t$ is a constant. Together with the fact that at most $m+1$ cells need to be visited to compute the expression on the left in~\eqref{eqn:const-agent-dp}, we conclude that the entire table $T$ can be filled in pseudo-polynomial time.

To compute the MNW value, we iterate over all cells $T[u_1,\dots,u_t,j]$ of the table which satisfy $T[u_1,\dots,u_t,j] \le B$, and output the cell which maximizes $\prod_{i\in[t]} u_i^{w_i}$, which can again be done in pseudo-polynomial time. The allocation itself can be computed using standard techniques used in dynamic programming algorithms to keep track of the partial solution.

We note that using the same table, we can iterate over all cells that satisfy the budget constraint. We keep an initial candidate leximin-optimal allocation and then whenever we find a cell that satisfies the budget constraint, we sort the utility vector and compare it with our candidate leximin-optimal allocation. Thus, this takes time $O(m\cdot n \log n \cdot (mV+1)^t).$ 
\end{proof}

Theorem~\ref{thm:constagents} also implies:
\begin{corollary}
For binary valuations, with constantly many types of agents $\publicmnw$ and $\publiclex$ are polynomial-time solvable.
\end{corollary}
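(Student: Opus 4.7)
The plan is to derive the corollary directly from Theorem~\ref{thm:constagents} by observing that binary valuations force the parameter $V$ in the runtime to be a constant. Specifically, when all $v_{ij} \in \{0,1\}$, we have $V = \max_{i \in \A, j \in \G} v_{ij} \le 1$, so $V$ drops out of the complexity expression as an $O(1)$ factor.

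Substituting $V = 1$ into the bounds from Theorem~\ref{thm:constagents}, the MNW algorithm runs in time $O(m \cdot m^{t}) = O(m^{t+1})$, and the leximin algorithm runs in time $O(m \cdot n \log n \cdot m^{t}) = O(m^{t+1} \cdot n \log n)$. Since the number $t$ of agent types is assumed to be a constant, both of these bounds are polynomial in the input size $m + n$. Hence both $\publicmnw$ and $\publiclex$ admit polynomial-time algorithms under the stated restrictions.

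There is no real obstacle here: the corollary is essentially a parameter specialization of the preceding theorem, and the only thing to verify is that the pseudo-polynomial dependence in Theorem~\ref{thm:constagents} is genuinely polynomial once $V$ is bounded by a constant, which is immediate for binary valuations. In particular, the dynamic programming table used in the proof of Theorem~\ref{thm:constagents} has size $(mV+1)^t (m+1) = O(m^{t+1})$ under binary valuations, so all the steps of the algorithm (populating the table, scanning it to find the MNW-maximizing or leximin-optimal cell, and reconstructing the corresponding allocation via standard backtracking) run in polynomial time.
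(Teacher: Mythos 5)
Your proposal is correct and follows exactly the paper's (implicit) derivation: the corollary is stated as an immediate consequence of Theorem~\ref{thm:constagents}, obtained by substituting $V=1$ into the running-time bounds $O(m\cdot(mV)^t)$ and $O(m\cdot n\log n\cdot(mV)^t)$, which become polynomial for constant $t$. Nothing further is needed.
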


\paragraph*{Constantly many types of goods.} We now consider instances where the number of \textit{types of goods} is constant. We say two goods $j_1, j_2\in\G$ have same type if for all agents $i \in \A$, $v_{ij_1} = v_{ij_2}$ and $c_{j_1} = c_{j_2}$. In this case, we can enumerate all feasible allocations efficiently, implying that an MNW or leximin-optimal allocation can be computed in polynomial-time.
\begin{restatable}{theorem}{thmconstgoods}\label{thm:constgoods}
For a $\public$ allocation instance $\I = (\A, \G, B, \{c_j\}_{j \in \G}, \{v_i\}_{i \in [n]})$ with t different types of goods, (i) an MNW, can be computed in time $O(m ^t)$ (ii) a leximin-optimal allocation can be computed in time $O( n \log n \cdot m^t)$.
\end{restatable}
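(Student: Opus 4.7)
The plan is to exploit the observation that goods of the same type are fully interchangeable: any two goods of type $\ell$ induce the same utility for every agent and carry the same cost, so every allocation is equivalent (in cost and in utility profile) to one parametrized by a \emph{type-count vector} $(k_1,\ldots,k_t)$ with $k_\ell\in\{0,1,\ldots,m_\ell\}$, where $m_\ell$ denotes the number of goods of type $\ell$. Since $\prod_{\ell=1}^t(m_\ell+1)\le(m+1)^t = O(m^t)$, brute-force enumeration over all such vectors will be efficient when $t$ is constant.

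The key steps, in order, are as follows. First, group the goods by type in a single preprocessing pass to obtain, for each type $\ell$, the count $m_\ell$, the common cost $c_\ell$, and the common per-agent values $v_{i\ell}$. Second, enumerate all type-count vectors $(k_1,\ldots,k_t)$ with $0\le k_\ell\le m_\ell$. Third, for each such vector check feasibility via $\sum_\ell k_\ell c_\ell\le B$ and, when feasible, compute each agent's utility as $v_i = \sum_\ell k_\ell v_{i\ell}$. Fourth, for part (i) maintain a running maximum Nash product (with the tiebreaking on zero-product allocations as specified in Section~2), and for part (ii) sort the utility vector in non-decreasing order in $O(n\log n)$ time and compare it lexicographically against the current leximin-best, updating when the new profile leximin-dominates. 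Finally, translate the optimal type-count vector back into an explicit subset of goods by arbitrarily selecting $k_\ell$ goods of type $\ell$ for each $\ell$.

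The running time bounds then follow directly from the accounting: for (i) the dominant cost is $O(m^t)$ enumeration with $O(nt)$ per-vector work absorbed when $t$ is constant, and for (ii) the additional $O(n\log n)$ sort per vector yields $O(n\log n\cdot m^t)$. There is no substantial obstacle in this argument; the main conceptual point is simply that type-equivalence collapses the $2^m$ space of subsets down to at most $(m+1)^t$ utility-distinct classes, after which plain enumeration suffices. The only minor care required is consistent handling of the zero-Nash-product case, exactly as defined in the preliminaries.
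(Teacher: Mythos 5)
Your proposal is correct and matches the paper's own argument: both enumerate the $O(m^t)$ type-count vectors, filter by the budget constraint, and track the best Nash product (resp.\ sorted utility vector) over feasible cells. Your explicit handling of the zero-Nash-product tiebreak and the reconstruction of an explicit good set are minor refinements of the same approach.
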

\begin{proof}
We prove this theorem by giving a pseudo-polynomial time algorithm. Suppose the number of distinct types goods of is $t$. We denote by $T_i$, goods of type $i\in[t]$, and let $c_i$ denote the cost of goods of type $i$. The algorithm populates a table $T[r_1, r_2, \ldots, r_t]$ where each $r_i$ denotes the number of items of type $T_i$ picked. This table has a size of $(m+1)^t$ which is polynomial in the size of instance $\I$ for constant $t$. Each cell in the table stores the $\NW$ value that we get when $r_i$ goods are picked of type $T_i$. To compute the optimal $\NW$ value, we iterate over all cells of the table that satisfy $\sum_{i \in [t]} r_i \cdot c_{i}\leq B$ and pick the maximum value. The corresponding allocation is also given by the index of the table.

In this same table, instead of NW values, we can store the utility vectors of the agents in each cell. We can then find the leximin-optimal allocation by iterating over all cells that satisfy $\sum_{i \in [t]} r_i \cdot c_{i}\leq B$. We keep a candidate leximin-optimal and whenever a cell satisfies the budget constraint, we sort the utility vector and compare it with our candidate leximin solution. Thus, the time required for this is $O((m+1)^t \cdot n \log n )$. 
\end{proof}

\section{Discussion}
In this paper, we considered the problem of allocating indivisible public goods to agents subject to a cardinality constraint. We showed fundamental connections between the models of private goods, public goods, and public decision making, by presenting polynomial-time reductions for the popular solution concepts of maximum Nash welfare (MNW) and leximin. We also showed that MNW and leximin-optimal allocations satisfy desirable fairness properties like Prop1 and RRS, and the efficiency property of PO. Further we showed that these objectives are computationally $\NP$-hard, including for several special cases like constantly many agents and binary valuations. Lastly, we designed an approximation algorithm for MNW and pseudo-polynomial time algorithms for the case of constantly many agents.

Our work opens up several interesting research directions. Firstly, extending our reductions to the budget model presents a challenging problem. A second question is devising an algorithm to compute a Prop1+PO or RRS+PO allocations in polynomial time, bypassing the hardness of computing MNW or leximin-optimal allocations. Appropriately defining properties like Prop1 in the budget model and investigating whether MNW and leximin satisfy them would be a third interesting research direction. Finally, designing constant-factor approximation algorithms, even for restricted cases like binary valuations, which captures a large class of voting-like scenarios, is another important open problem.



\bibliography{references}

\appendix

\section{Missing Proofs from Section~\ref{sec:hardness}}\label{app:hardness}
\lemamgm*
\begin{proof}
 Let $a_i = \ell + r_i$, $r_i \in [r] \cup \{0\}$ and $\sum_{i = 1}^n r_i = r$. Let $S_1 = \{ i : r_i \ge 1\}$ and $S_{0} = \{ i: r_i = 0 \}$. Since $\sum_{i=1}^n a_i = \ell \cdot n + r$,
\begin{equation}\label{eqn:partition}
|S_0| = n - r + \sum_{i \in S_1} (r_i - 1) .
\end{equation}
Now,
\begin{align*}
    \prod_{i=1}^n a_i &= \prod_{i \in S_1} (\ell + r_i) \cdot \prod_{i \in S_0} \ell,\\
    &= \prod_{i \in S_1} (\ell + r_i) \cdot \ell^{|S_0|},\\
    &= \prod_{i \in S_1} ((\ell + r_i) \cdot \ell^{r_i - 1}) \cdot \ell^{(n - r)}, \\
    &\leq \prod_{i \in S_1} (\ell + 1)^{r_i} \cdot \ell^{(n - r)},\\
    &=(\ell+1)^r \cdot \ell^{(n-r)}.
\end{align*}
where the third transition follows from (\ref{eqn:partition}), the fourth transition follows from binomial theorem and the last equality follows from (\ref{eqn:partition}) combined with the fact that $S_0, S_1$ form a partition of $[n]$. The fourth inequality is an equality if and only if $r_i = 1, \forall i \in S_1$. Thus the maximum value of product is $(\ell+1)^r \cdot \ell^{(n-r)}$. 
\end{proof}
\end{document}